\newif\ifFull
\newcommand{\e}{{\rm e}}
\newcommand{\E}{{\rm E}}
\newtheorem{cond}{Condition}
\newtheorem{theorem}{Theorem}
\newenvironment{proof}{\noindent{\bf Proof:}}{\hspace*{\fill}\rule{6pt}{6pt}\bigskip}
\author{Michael Mitzenmacher\thanks{School of Engineering and Applied Sciences, Harvard University.
This work was supported by NSF grants CCF-0915922, IIS-0964473,CNS-1228598, and CCF-1320231.  Part of this work was done while visiting Microsoft Research New England.}}
\title{A New Approach to Analyzing Robin Hood Hashing}
\date{}
\begin{document}

\maketitle

\vspace{-0.3in}

\begin{abstract}
Robin Hood hashing is a variation on open addressing hashing designed
to reduce the maximum search time as well as the variance in the
search time for elements in the hash table.  While the case of
insertions only using Robin Hood hashing is well understood, the
behavior with deletions has remained open.  Here we show that Robin
Hood hashing can be analyzed under the framework of finite-level
finite-dimensional jump Markov chains.  This framework allows us to
re-derive some past results for the insertion-only case with some new
insight, as well as provide a new analysis for a standard deletion
model, where we alternate between deleting a random old key and
inserting a new one.  
\end{abstract}
\thispagestyle{empty}

\newpage

\setcounter{page}{1}

\vspace{-0.35in}
\section{Introduction}
\vspace{-0.05in}
Robin Hood hashing is a variation on open addressing hashing designed
to reduce the maximum search time as well as the variance in the
search time for elements in the hash table.  Here we are interested in
the setting where the probe sequences are random.  We briefly describe
the setup, starting with a setting with insertions only.  We have a hash table with $n$ cells, and $m = \lceil
\alpha n \rceil$ keys to place in the table.  We refer to $\alpha$ as
the load of the table; generally, we assume $\alpha n$ is an integer
henceforth. Each key $K_i$ has an associated infinite probe sequence
$K_{ij}$, with $j \geq 1$, where the $K_{ij}$ are uniformly
distributed over $[0,n-1]$.  Equivalently, the $K_{ij}$ are determined
by a random hash function $h$, where for a keyspace $K$ the hash
function has the form $h : K \times \mathbb{N} \rightarrow [0,n-1]$.\footnote{Alternatively,
we could have each probe sequence be a random permutation of $[0,n-1]$ for each key;
for our purposes, the two models are essentially equivalent, and we use
the random hash function model as it is easier to work with.}
Each key will be placed according to a position in its probe sequence.
If the $i$th element is placed in cell $K_{ij}$, and there is no $j' <
j$ such that $K_{ij'} = K_{ij}$, we shall say that the {\em age} of
the key is $j$.  If we use the standard search process for an key, by
which we mean sequentially examining cells according to the probe
sequence, the age of a key in the table corresponds to the number of
cells that must be searched to find it.  We assume that we keep track
of the age of the oldest key in the table.  In the standard search
process, one determines that a key not in the table is not present by
sequentially examining cells according to the probe sequence until
either an empty cell is found, or one has found that the key being
searched for must be older than oldest key in the table.  An empty cell
provides a witness that the key is not in the table.  We refer to a
search for a key not in the system as an unsuccessful search.

For the insertion of keys in the table, we may think of the keys as being
{\em placed} sequentially, using the probe sequence in the following
manner.  If $K_{i1}$ is empty when the $i$th key is inserted into the
table, the key is readily placed at cell $K_{i1}$.  Otherwise, there
is a collision, and a collision resolution strategy is required.  The
main point of Robin Hood hashing is that it {\em resolves collisions in
  favor of the key with the larger age}; the key with the smaller age
must continue sequentially through its probe sequence.  Notice that,
under Robin Hood hashing, a placed key will be displaced by the key
currently being placed if the placed key's age is smaller.  In this
case the placed key is moved from its current cell and becomes the
item to be placed, consequentially increasing its age.  Other standard
conflict resolution mechanisms are first come first served and last
come first served.  By favoring more aged keys, Robin Hood hashing
aims to reduce the maximum search time required.

Most of the results for Robin Hood hashing appear in the thesis of Celis
\cite{celisphd}, who provides a number of theoretical and empirical results.
(See also \cite{celis1985robin}.)  The following results are especially
worth mentioning.  First, when there are only
insertions, Celis analyzes the asymptotic behavior of Robin Hood
hashing (in the infinite limit setting) for loads $\alpha < 1$.  We
describe this result further in Section~\ref{sec:agedist}.  Second,
Celis shows that the total expected insertion cost in terms of the
number of probes evaluated by the standard insertion process -- or
equivalently the average age of keys in the table -- is the same for a
class of ``oblivious'' collision resolution strategies that do not
make use of knowledge about the future values in the probe sequences
and that include Robin Hood hashing (as well as first come first served
and last come first served).  Third, Devroye, Morin, and Viola have
shown that for $\alpha < 1$ the maximum search time for Robin Hood
hashing is (upper and lower) bounded by $\log_2 \log_2 n
\pm O(1)$ with probability $1-o(1)$, where the $O(1)$ terms depend on
$\alpha$ \cite{devroye2004worst}.  This double-logarithmic behavior
also occurs with quite different hashing schemes based on the power
of multiple choices \cite{ABKU,TwoSurvey}.  Finally, we note that Robin
Hood hashing has been also studied extensively in the setting of linear probing
schemes \cite{janson2005individual,viola1995analysis,viola2005exact}.

Here we provide a new approach for analyzing Robin Hood
hashing, based on a fluid limit analysis utilizing differential
equations.  An interesting aspect of our analysis is that it requires
using an additional level parameter, corresponding to a faster-moving
Markov process (tracking the age of current key being placed) beyond
the larger-scale Markov process (tracking the distribution of ages in
the table).  This type of analysis was previously used to study load
balancing schemes with
memory \cite{luczak2013averaging,mitzenmacher2002load}.  Our analysis
allows us to re-derive previous results for Robin Hood hashing, such as the asymptotic
behavior for loads $\alpha < 1$, while also
providing some additional novelty, such as concentration bounds for
finite $n$.  We also re-derive the high probability upper bound on the
maximum search time for Robin Hood
hashing of $\log_2 \log_2 n + O(1)$ of \cite{devroye2004worst}, with what
we suggest is a simpler and more intuitive proof.
More importantly, our approach is amenable to studying
Robin Hood hashing with deletions of random keys, an area that lacked a theoretical
framework for analysis previously.  We study the deletion scheme proposed 
by Celis in \cite{celisphd} under the setting of random deletions of keys and
new keys being inserted (maintaining a constant load $\alpha$), and 
suggest and analyze an alternative deletion scheme that is
simpler for practical implementations.

Before beginning, we remark that practical use is not our main
motivation for studying the Robin Hood hashing variant we examine here
(although we have seen some suggestions that it is still used on
occasion).  Robin Hood hashing can require substantially more
randomness than many other hashing schemes (such as cuckoo hashing or
other multiple-choice hashing schemes), and practical considerations
such as cache performance and prefetching suggest that one would
prefer to use the linear probing variant in almost all settings.  Our
motivation instead is in understanding this classical and
combinatorially simple-seeming hashing scheme, as well as in the
techniques that can be used to analyze it.  In particular, the 
double logarithmic bound on the search time requires some non-trivial additional
technical work beyond the the standard layered induction approach, and our analysis
of performance with deletions appears entirely new.   

In what follows, we provide background on the fluid limit approach we use here.
We then study Robin Hood hashing in the setting of insertions only under
this framework, and subsequently move on to examining how to analyze settings
with random deletions.  We note that our work includes extensive simulations that demonstrate
show the accuracy of our approach.  Due to space limitations, we put these in appendices.

\vspace{-0.15in}
\section{Limiting Framework}
\vspace{-0.1in}

For our limiting framework, we can work in the setting of finite-level
finite-dimensional jump Markov chains.  Here we roughly follow the
exposition of \cite{mitzenmacher2002load}; further development can be found in
\cite{ShwartzWeiss}.  Our discussion here is brief, 
and may be skipped by the uninterested reader willing 
to accept the more intuitive explanations that follow.  
We suspect
this methodology should be useful for studying other hashing variations.

In our setting, a chain with $D$ dimensions and $L$ levels will have
the state space ${\mathbb R}^D \times \{1,2,\ldots,L\}$.  The state can be
represented as a $D+L$-tuple in the natural way as follows:
$(\bar{x};m) = (x_1,\dots,x_{D};0,\dots,1,\dots,0)$, where a $1$ in
position $D+m, 1\leq m\leq L$, represents that the system is in level
$m$.  When in state $(\bar{x};m)$ the system can make $\zeta(m)$
possible different jumps.  Here we describe only unit jumps based
on unit vectors, which suffices for our main application, but more general
jumps are possible.  The process jumps to state $(\bar{x}+\bar{e}_i(m);g_i(m)) =
(\bar{x}+\bar{e}_i(m); 0,\dots,1,\dots,0)$
with rate $\nu_i(\bar{x};m)$, for $1\leq i\leq \zeta(m)$, where the 1 is
in position $D+g_i(m)$.
Here $\bar{e}_i(m)$ is a unit vector in one of the $D$ dimensions, and
$g_i(m)$ is the (new) level associated with the $i$th of the $\zeta(m)$ possible jumps;
note that $g_i(m)$ might itself be $m$,
so that the level may not change.  
The high-level idea is that here we have an underlying
finite-dimensional jump Markov process, but we also have an
additional associated ``level'' process that may drive the transition
rates of the primary jump Markov process.

The generator $A$ of this Markov process, which operates on real valued
functions $f: {\mathbb R}^{D+L}\to{\mathbb R}$, is defined as:
\vspace{-0.05in}
\begin{eqnarray}
 Af(\bar{x};m) & = & \sum_{i=1}^{\zeta(m)} \nu_i(\bar{x};m)
 [f(\bar{x}+\bar{e}_i(m);g_i(m))-f(\bar{x};m)]  \label{gen1}
\end{eqnarray}
We now consider a scaled version of this process, with scaling parameter $n$,
where the rate of each transition is scaled up by a factor of $n$ and
the jump magnitude is scaled down by a factor of $n$. The state of this scaled
system will be represented by $(\bar{s}_n;m) = (s_1,\dots,s_{D};0,\dots,\frac{1}{n},\dots,0)$
(with now a $1/n$ term in the position for level $m$).  
The associated jump vectors will be $(\frac{\bar{e}_i}{n};0,\dots,-\frac{1}{n},\dots,\frac{1}{n},\dots,0)$,
with corresponding rates are $n\nu_i(\bar{s}_n;m)$ for $1\leq i\leq\zeta(m)$.
(Note that in the case where the level does not change, the $-\frac{1}{n}$ and $\frac{1}{n}$ jumps
in the level should be interpreted as being in the same coordinate, so no change occurs.)
The generator for the scaled Markov process is:
\vspace{-0.05in}
\begin{eqnarray}
\label{gen2}
A_nf(\bar{s}_n;m) & = & \sum_{i=1}^{\zeta(m)} n\nu_i(\bar{s}_n;m)
 \{ f[\bar{s}_n+ \frac{\bar{e}_i(m)}{n};g_i(m)] - f(\bar{s}_n;m)\}. 
\end{eqnarray}
The following theorem (Theorem 8.15 from \cite{ShwartzWeiss}) describes the
evolution of the typical path of the scaled Markov process in the limit as
$n$ grows large.  The idea behind the theorem is that because the
finite-level Markov chain reaches equilibrium in some finite time, for large
enough $n$ the approximation that the finite-level Markov chain is in
equilibrium is sufficient to obtain Chernoff-like bounds.  

\begin{theorem}
\label{ldthm}
Under Conditions \ref{c1} and \ref{c2} below, for any given $T$ and constant $\epsilon > 0$, there 
exist positive constants $C_1,C_2(\epsilon)$ and $n_0$ such that for all initial
positions $\bar{s}^0 \in {\mathbb R}^D$, any initial level $ m\in \{0,1,\dots,L-1\}$, and 
any $n \geq n_0$,
\begin{eqnarray}
\Pr_{\bar{s}^0,m}\left( \sup_{0\leq t \leq T}\left|\bar{s}_n(t)-\bar{s}_{\infty}(t)\right|>\epsilon \right)
            & \leq & C_1 \exp(-nC_2(\epsilon)),
\end{eqnarray}
where $\bar{s}_{\infty}(t)$ satisfies the following:
\vspace{-0.05in}
\begin{eqnarray}
\frac{d}{dt}\bar{s}_{\infty}(t) & = & \sum_{l=0}^L \Pr(m(t)=l) \sum_{i=1}^{\zeta(l)}
  \nu_i(\bar{s}_{\infty};l)\bar{e}_i(l) \ ; \label{diffeq} \\
\bar{s}_{\infty}(0) & = & \bar{s}^0. \nonumber
   \end{eqnarray}
Here $\Pr(m(t)=l)$ is the equilibrium probability of the level-process being in level $l$
given the state $\bar{s}_{\infty}(t)$.
\end{theorem}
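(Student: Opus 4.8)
The plan is to read Theorem~\ref{ldthm} as a two-timescale \emph{averaging} statement and to establish it by combining a martingale (Dynkin) decomposition of the scaled process with an equilibrium-averaging argument for the fast level coordinate, and then to upgrade the resulting functional law of large numbers to the claimed Chernoff-type rate by means of an exponential-martingale estimate. The governing observation is that under the scaling in \eqref{gen2} the level coordinate changes at rate $\Theta(n)$, whereas each jump moves the spatial coordinates $\bar{s}_n$ by only $O(1/n)$. Hence over any window of length $\delta$ with $1/n \ll \delta \ll 1$ the level process undergoes $\Theta(n\delta)$ transitions---more than enough to relax to its conditional equilibrium---while $\bar{s}_n$ moves by only $O(\delta)$. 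This separation of timescales is exactly what causes the equilibrium weights $\Pr(m(t)=l)$, rather than an instantaneous level, to appear in the limiting drift \eqref{diffeq}.

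\textbf{Martingale decomposition.} The first concrete step is to apply Dynkin's formula to the spatial coordinate projections using the generator $A_n$ of \eqref{gen2}, which yields
\[
\bar{s}_n(t) = \bar{s}^0 + \int_0^t \sum_{i=1}^{\zeta(m_n(u))} \nu_i\!\left(\bar{s}_n(u);m_n(u)\right)\bar{e}_i(m_n(u))\,du + \bar{M}_n(t),
\]
where $m_n(u)$ is the level occupied at time $u$ and $\bar{M}_n$ is a martingale. Because each jump perturbs a spatial coordinate by $1/n$ while jumps occur at rate $\Theta(n)$, the predictable quadratic variation of $\bar{M}_n$ over $[0,T]$ is $O(1/n)$; this is the seed of the eventual concentration.

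\textbf{Averaging and concentration.} The integrand above carries the instantaneous level $m_n(u)$, whereas \eqref{diffeq} carries the equilibrium average $\sum_l \Pr(m(u)=l)\sum_i \nu_i(\bar{s}_\infty;l)\bar{e}_i(l)$. I would justify this replacement by partitioning $[0,T]$ into blocks of length $\delta$ and freezing the slow state at each block's left endpoint; the Lipschitz regularity furnished by Conditions~\ref{c1} and~\ref{c2} bounds the freezing error by $O(\delta)$, while on each block the fast level chain---irreducible on the finite level set, with a unique equilibrium $\pi_l(\bar{s})$ depending continuously on $\bar{s}$---has empirical occupation close to $\pi_l$. To reach the exponential rate rather than a bare limit, I would (i) bound the exponential moments of $\bar{M}_n$ using the explicit jump structure, noting that $A_n\exp(n\theta\cdot\bar{s})/\exp(n\theta\cdot\bar{s}) = n\sum_i \nu_i[\exp(\theta\cdot\bar{e}_i)-1]$ is $O(n)$ with quadratic leading term, so that an exponential supermartingale together with Doob's maximal inequality controls $\sup_{t\le T}|\bar{M}_n(t)|$ at rate $\exp(-nc(\epsilon))$; and (ii) observe that since the level chain equilibrates in $O(1/n)$ time, the block-occupation deviates from $\pi_l$ by more than $\epsilon$ only with probability exponentially small in $n\delta$, so a union bound over the $O(1/\delta)$ blocks preserves the exponential rate. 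The Lipschitz property also gives existence and uniqueness for the averaged ODE, and a Gr\"onwall argument transfers the two exponential estimates to a bound on $\sup_{t\le T}|\bar{s}_n(t)-\bar{s}_\infty(t)|$, with $C_2(\epsilon)$ the smaller of the two exponents.

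\textbf{Main obstacle.} The delicate point is coupling the two timescales \emph{at the exponential level}: I must show that the joint deviation of the fast occupation measure and the slow trajectory is controlled uniformly over all frozen states $\bar{s}$ reachable before time $T$, with a rate that does not degrade as the level chain's spectral gap varies over that (compact) range. This is where the finiteness of $L$ is essential---it guarantees a uniform relaxation rate and lets the large-deviation exponent of the fast chain be taken uniform in the slow variable, so the two scales decouple without sacrificing the Chernoff-type exponent. Making this uniformity precise, rather than the individual law-of-large-numbers and exponential-moment computations, is the real work.
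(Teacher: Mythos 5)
You should know at the outset that the paper contains no proof of this statement to compare against: Theorem~\ref{ldthm} is imported wholesale as Theorem 8.15 of Shwartz and Weiss \cite{ShwartzWeiss}, and the paper simply cites it. Judged on its own merits, your outline is a credible and essentially standard reconstruction of how results of this type are proved, and it is close in spirit to the averaging arguments of Luczak and Norris \cite{luczak2013averaging}: a Dynkin decomposition of the slow coordinates with an $O(1/n)$ quadratic-variation martingale, block-wise freezing of the slow state, replacement of the instantaneous level occupation by the conditional equilibrium $\pi_l(\bar{s})$ using the uniform relaxation of the finite fast chain, exponential-supermartingale plus Doob control of the fluctuations, and a Gr\"onwall transfer onto the ODE \eqref{diffeq}. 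The difference from the cited source is one of ambition rather than correctness: Shwartz and Weiss develop a full pathwise large-deviations theory for jump processes with levels (rate functions over path space, upper and lower bounds), of which the quoted inequality is essentially the upper-bound specialization to a neighborhood of the typical path; your argument targets only that inequality and is correspondingly more elementary, which is a legitimate trade. If you were to complete the sketch, the places needing care are the uniformity issue you already flag, plus two points you pass over quickly: the block length $\delta$ must be chosen as a constant of order $\epsilon$, independent of $n$, so that the union bound over the $O(1/\delta)$ blocks preserves the $\exp(-nC_2(\epsilon))$ form rather than degrading it; and the equilibrium weights $\pi_l(\bar{s})$ must be shown Lipschitz in $\bar{s}$, not merely continuous, since your Gr\"onwall step implicitly requires Lipschitz continuity of the averaged drift in \eqref{diffeq}. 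The latter does follow here because the stationary distribution of a finite ergodic chain is a rational function of its transition rates and Condition~\ref{c2} bounds those rates above and below, but it is a step your proposal uses without stating.
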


\vspace{-0.05in}
\begin{cond}\label{c1}
For any fixed value of $\bar{x} \in {\mathbb R}^D$, the Markov process evolving over 
the levels $\{1,\dots,L\}$ with transition rate $\nu_i(\bar{x};m)$ of going to level 
$g_i(m)$ from level $m$, is ergodic.
\end{cond}
\vspace{-0.15in}
\begin{cond}\label{c2}
The functions $\log \nu_i(\bar{x};y)$ are bounded and Lipschitz continuous
in $\bar{x}$ for every $y$ (where continuity is in all the $D$ coordinates).
\end{cond}
\vspace{-0.05in}

We note that a limitation of this approach is that it directly provides bounds
only on the finite-dimensional version of the process.  It is thus not
immediate that one can obtain rigorous bounds on the maximum search time for
Robin Hood of the form $\log_2 \log_2 n \pm O(1)$ as
in \cite{devroye2004worst} directly using this approach, as tracking
$\log_2 \log_2 n$ dimensions takes us outside the finite-dimensional realm. Instead, one
may use these results as useful intuition for guiding non-limiting
probabilistic arguments such as that derived
in \cite{devroye2004worst}, as we show here.  In return for this limitation, however, this approach
provides simple and general means for generating rich, accurate numerical results
that can aid in design and performance testing for real-world
implementations.

\vspace{-0.15in}
\section{Robin Hood Hashing with Insertions Only}
\vspace{-0.1in}
\subsection{Applying the Limiting Framework}
\vspace{-0.05in}

We first describe the standard process, which corresponds to the unscaled process
described above;  we generally use the term unscaled process where
the meaning is clear.
Each time step corresponds
to an attempted placement of a key, which can either be a new key, or
a key that was not successfully placed at the last time step, or a key
that was displaced by another key at the last time step.

To keep track of the state, we take advantage of the fact that keys
are placed randomly into cells.  Hence, for the state is suffices to
track the number of cells holding keys for each age; their actual
position does not matter.  Each time step corresponds to an attempt to
place a key.  Note that the number of time steps here is {\em not}
equal to the number of keys placed; placing a new key can take several
time steps with Robin Hood hashing, and as discussed during that
process the key being placed can take the place of another key which
then has to be placed.  Each such placement attempt represents a time
step.  As is often the case with hashing schemes, we find it more
useful to look at the tails of the loads rather than the loads
themselves.  Therefore, we let $x_i(t)$ be the fraction of cells with
a key with age at least $i$ after $t$ unscaled time steps.  For our
scaled version of the state, we let $s_i(t)$ be the fraction of cells
holding a key of age at least $i$ afer $tn$ key placements have been
tried, so that $x_i(nt) = s_i(t)$.  The level in our process will
correspond to the age of key currently being placed.  Fresh keys that
are newly being inserted have age 1.

We note that, as described, the process is infinite-dimensional, in 
that we can consider the values $s_i$ for all $i \geq 1$.  Indeed,
this is usually how we will think of the process, although as we show
later we can ``truncate'' the system at any finite value of $i$, which
can allow us to apply Theorem~\ref{ldthm}.

As a warm-up in understanding the scaled process, note that when the
load of the table is $z$, so that $zn$ cells contain a key, the number
of time steps to place an element is geometrically distributed with
mean $1/(1-z)$.  Hence in the limiting scaled process $\bar{s}_{\infty}$,
with the initial state being an empty table,
the load will be $\alpha$ at time
$$\int_{z=0}^\alpha \frac{1}{1-z} dz = \ln \frac{1}{1-\alpha}.$$ That
is, we run until the scaled time $\ln \frac{1}{1-\alpha}$, which corresponds to
(in the unscaled process, asymptotically) $n\ln \frac{1}{1-\alpha}$
time steps.  Alternatively, in the limiting scaled process, at time
$t$, the load is $1-e^{-t}$.

We now turn to understanding the level process, assuming that the
state of the table is fixed according to the values $s_i$.  Again we find it useful to
consider the tails.  Thinking of the unscaled process, so $t$ again
refers to discrete time steps, let $p_i(t)$ be the probability that
the age of the key being placed is at least $i$.  Hence $p_1(t) = 1$
for all time steps.  For $i > 1$, the key being placed at time $t+1$
will have age at least $i$ if and only if both the age of the key being placed
at time $t$ has age at least $i-1$, and cell chosen
by the probe sequence at time $t$ has age at least $i-1$.  This is because,
assuming an empty cell is not found, the younger of the keys will be the
key being placed at the next time step.  
In equation form, we simply have
\vspace{-0.1in}
\begin{eqnarray}
p_i(t+1) = p_{i-1}(t) x_{i-1}(t).
\end{eqnarray}
In the scaled time setting, this can be written as
\vspace{-0.05in}
\begin{eqnarray}
\label{pi}
p_i(t+1/n) = p_{i-1}(t) s_{i-1}(t).
\end{eqnarray}
Notice that the simplicity of this equation helps justify our decision
to focus on variables that represent the tails of the loads.  

Since $s_1$ is bounded by the final load $\alpha$, at each step
with probability at least $1-\alpha$ a key is placed and the level
returns to 1.  Hence the Markov process over the levels is ergodic.
Indeed, standard methods show that for any constant $\epsilon$ this
Markov can be made $\epsilon$-close in statistical distance to its
equilibrium distribution after some corresponding constant number of
placement steps.  We re-emphasize the intuition; in the scaled
process, the $s_i$ values change significantly (that is, by
$\Omega(1)$) only after $\Omega(1)$ scaled time steps (or $\Omega(n)$
unscaled time steps), while the Markov chain governing the $p_i$
converges (arbitrarily closely) to its stationary distribution in
$o(n)$ unscaled time steps.  Hence, it make sense in the limit to
treat the $p_i$ values as fixed in equilibrium given the $s_i$
values.  

It follows from Equation~(\ref{pi}) that the equilibrium for the level process 
satisfies
\vspace{-0.05in}
\begin{eqnarray}
\label{pie}
p_i = p_{i-1} s_{i-1}
\end{eqnarray}
when we treat the $s_i$ as fixed and we use $p_i$ without the $t$ to denote
the equilibrium for the $p_i$ values given the $s_i$ values at that time.

With this we turn our attention to the limiting equations for the $s_i$ in
$\bar{s}_{\infty}$.  
Note that $s_1$ increases whenever a empty cell is hit.  Hence
\vspace{-0.05in}
\begin{eqnarray}
\label{s1}
\frac{ds_1}{dt} = 1-s_1.  
\end{eqnarray}
Integrating, and using $s_1(0) = 0$, this gives
$s_1(t) = 1-e^{-t}$, matching our previous warm-up analysis.
For $s_i$ when $i > 1$, Equation~(\ref{s1}) generalizes to
\vspace{-0.05in}
\begin{eqnarray}
\label{si}
\frac{ds_i}{dt} = p_i(1-s_i),  
\end{eqnarray}
since a cell containing a key with age at least $i$ is created whenever
the age of the key being placed is at least $i$ and the probe sequence
finds either an empty cell or a cell containing a key with age less than $i$. 

At any time $t$, let $\beta(t)$ be the corresponding load at that time.
(Recall we use $\alpha$ for the ``final'' load.)  We use $\beta$ for $\beta(t)$
where the meaning is clear.
Since $\beta(t) = 1 - e^{-t}$, we have
using 
$$\frac{d\beta}{dt} = e^{-t} = 1 - \beta.$$
At the possible risk of confusion, but to avoid conversions back and
forth, we use $s'_i(\beta)$ to represent $s_i$ taken as a function of
the load $\beta$ instead of as a function of time.   We have from the above
that in the setting of the asymptotic limit $s'_i(\beta) = s_i(-\ln(1-\beta))$.
With the expressions for $\frac{d\beta}{dt}$ and $\frac{ds_i}{dt}$ we obtain
the following form for 
$s'_i(\beta)$ as a function of $\beta$ for $i \geq 1$.  
\vspace{-0.05in}
\begin{eqnarray}
\label{si3}
\frac{ds'_i}{d\beta} = \frac{p_i (1-s'_i)}{1-\beta}.  
\end{eqnarray}

Given our equations for $p_i$, we can substitute so that all equations
are in terms of the $s_i$.  Specifically, the equation for
$\frac{ds_i}{dt}$ (or $\frac{ds_i}{d\beta}$) depends only on values
$s_j$ with $j \leq i$.  That is, 
\vspace{-0.05in}
\begin{eqnarray}
\frac{ds_i}{dt} & = & p_i(1-s_i) \\
                & = & p_{i-i}s_{i-1}(1-s_i) \\
                & = & (1-s_i) \prod_{j=1}^{i-1}s_{j}.
\end{eqnarray}
The differential equations can therefore be
solved numerically for $s_i$ values up to any desired constant $K$.  Moreover,
we can truncate the infinite system of differential equations to a
finite system by considering the equations for $\frac{ds_i}{dt}$ up to the
constant $K$.  Because of this, using the large deviation theory, we may formally state
the following:
\begin{theorem} \label{thm:inapp} For any fixed constant $K$ and any $\alpha < 1$, 
for  $i \leq K$ let $s'_i(\alpha)$ 
be the solution for the $s'_i$ at final load $\alpha$ from  the family
of differential equations given by Equation (\ref{si3}) above.  For  $1 \leq i \leq K$, let $X_{i,n}$ be the random variable denoting the fraction of cells with keys of
age at least $i$ using Robin Hood hashing at final load $\alpha$ with $n$ cells.
Then for any $\epsilon > 0$, for sufficiently large $n$
\begin{eqnarray}
\Pr\left( \left|X_{i,n}-s'_i(\alpha)\right|>\epsilon \right)
            & \leq & C_1 \exp(-nC_2(\epsilon)),
\end{eqnarray}
where $C_1$ is a constant that depends on $K$ and $\alpha$ and $C_2(\epsilon)$ is a
constant that depends on $K$, $\alpha$, and $\epsilon$.
\end{theorem}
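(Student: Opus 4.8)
The plan is to exhibit the truncated insertion process as a finite-level finite-dimensional jump Markov chain of exactly the type covered by Theorem~\ref{ldthm}, and then read off the single-coordinate concentration bound from that theorem's conclusion. Concretely, I would take the $D=K$ slow coordinates to be $\bar{s}=(s_1,\dots,s_K)$, the tails of the age distribution, and take the level $m$ to be the age of the key currently being placed, \emph{truncated at $K$} (so that level $K$ records ``age $\ge K$''). One unscaled step is a single probe of the in-flight key, whose effect splits into three jump families: hitting an empty cell (rate proportional to $1-s_1$), which increments every tracked tail up through the current age and resets the level to $1$; winning a collision against a cell of age exactly $b<m$ (rate proportional to $s_b-s_{b+1}$), which increments the tails from $b{+}1$ through the current age and sends the level to $b+1$; and losing a collision (rate proportional to $s_m$), which leaves the tails unchanged and advances the level by one. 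These are the ``more general'' (multi-coordinate) jumps the framework permits, and their scaled rates are the $n\nu_i(\bar s_n;m)$ of Equation~(\ref{gen2}).

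The structural observation making the truncation legitimate is that the drift of $s_i$ for $i\le K$ depends on the level only through the event $\{\text{level}\ge i\}$, whose equilibrium probability is $p_i$, and that the $p_i$ for $i\le K$ are pinned down by the \emph{closed} recursion $p_i=p_{i-1}s_{i-1}$ of Equation~(\ref{pie}), involving only $s_1,\dots,s_{K-1}$. Merging all ages $\ge K$ into one super-level therefore leaves every transition of the level chain expressible in the tracked $s_1,\dots,s_K$ and does not disturb $p_1,\dots,p_K$. Summing the empty-hit and win contributions to the $s_i$-drift telescopes to $1-s_i$, so the fluid field of Equation~(\ref{diffeq}) gives $\frac{d}{dt}s_i=\sum_{l\ge i}\Pr(m=l)(1-s_i)=p_i(1-s_i)$, reproducing Equation~(\ref{si}); the change of variables $\beta=1-e^{-t}$ then yields Equation~(\ref{si3}), whose solution at $\beta=\alpha$ is by definition $s'_i(\alpha)$. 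For Condition~\ref{c1} I would reuse the text's argument: each step resets the level to $1$ with probability at least $1-s_1\ge 1-\alpha>0$, so the level chain has geometric return times to state $1$ and is ergodic for every fixed $\bar s$.

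The main obstacle is Condition~\ref{c2}: the rates are bounded above (by $1$) but are not bounded away from $0$ on the whole state space, so the $\log\nu_i$ are not globally bounded or Lipschitz. The degeneracies sit precisely on the faces $\{s_1=0\}$ and $\{s_b=s_{b+1}\}$ and, most seriously, at the empty-table start where every $s_i=0$. I would handle this by localizing: for any fixed $\beta_0\in(0,\alpha)$ the solution of Equation~(\ref{si3}) satisfies $s'_i(\beta)>0$ and $s'_b(\beta)>s'_{b+1}(\beta)$ strictly for all relevant $b$ and all $\beta\in[\beta_0,\alpha]$ (a short consequence of~(\ref{si3}), since once the lower tails are positive every drift is strictly positive), so on the compact interval $[\beta_0,\alpha]$ all rates are bounded below by a positive constant depending only on $K$ and $\beta_0$, and there Condition~\ref{c2} holds. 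One then applies Theorem~\ref{ldthm} on that window, while the initial segment up to load $\beta_0$ is disposed of separately, where the $O(\beta_0 n)$ steps admit a direct Chernoff bound (or one perturbs the rates to $\max(\nu_i,\delta)$, applies the theorem, and lets $\delta\to 0$ using that the trajectories coincide). Equivalently, start the chain from a lightly loaded configuration and invoke monotonicity.

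Finally I would reconcile the two notions of stopping. Theorem~\ref{ldthm} controls $\sup_{0\le t\le T}|\bar s_n(t)-\bar s_\infty(t)|$ for the fixed horizon $T=\ln\frac{1}{1-\alpha}$, whereas $X_{i,n}$ is measured when the load first equals $\alpha$, a state-dependent time. Since $s_1(t)$ is monotone and concentrates around $1-e^{-t}$, the hitting time of load $\alpha$ concentrates around $T$, and combined with the uniform-in-time control this transfers concentration to the crossing, giving $|X_{i,n}-s'_i(\alpha)|\le\epsilon$ off an event of probability at most $C_1\exp(-nC_2(\epsilon))$. The single-coordinate statement of Theorem~\ref{thm:inapp} is weaker than the sup-over-all-coordinates bound Theorem~\ref{ldthm} supplies, so it follows immediately, with the dependence of $C_1$ on $K,\alpha$ and of $C_2(\epsilon)$ on $K,\alpha,\epsilon$ entering through the dimension and level counts and through the lower bounds used to verify Condition~\ref{c2}.
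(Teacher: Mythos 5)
Your proposal follows the same skeleton as the paper's own proof: exhibit the truncated system as a finite-level, finite-dimensional jump Markov chain, invoke Theorem~\ref{ldthm}, check Conditions~\ref{c1} and~\ref{c2}, and then convert between the fixed-horizon conclusion of Theorem~\ref{ldthm} and the fixed-load quantity $X_{i,n}$ via concentration of the time to reach load $\alpha$ (the paper does exactly this with a coupling plus martingale/Chernoff bounds on the number of placement steps; your final paragraph is the same argument). Where you genuinely diverge is in the two places the paper is tersest. First, you make the truncation explicit---merging all ages $\ge K$ into a super-level and checking that its outgoing transitions and the recursion $p_i = p_{i-1}s_{i-1}$ of Equation~(\ref{pie}) close over $s_1,\dots,s_K$---and you correctly observe that the jumps are multi-coordinate in the tail variables, whereas the paper claims unit jumps suffice. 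Second, and more substantively, your treatment of Condition~\ref{c2} is different: the paper dismisses vanishing rates by declaring zero-rate transitions ``absent'' and otherwise invoking $x_i \ge 1/n_0$, a discreteness argument whose bounds implicitly degrade with $n$; you instead localize to a load window $[\beta_0,\alpha]$ on which the fluid trajectory keeps every rate bounded below by a constant, dispose of the initial segment trivially (all tails are at most $\beta_0$ there, so choosing $\beta_0 < \epsilon/2$ costs nothing), and patch reachable degenerate states via the rate-perturbation/coupling device. This buys a more defensible verification of the theorem's hypotheses, at the cost of extra bookkeeping: you need continuous dependence (Gronwall) on the initial condition at $\beta_0$, and the perturbation should be run with a fixed small $\delta$ chosen below the trajectory's rate floor rather than ``letting $\delta \to 0$.'' Both routes yield the stated bound with constants depending on $K$, $\alpha$, and $\epsilon$; yours is essentially a more rigorous rendering of what the paper's appendix sketches.
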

\begin{proof}
The result follows from Theorem~\ref{ldthm}.  While Theorem~\ref{ldthm}
is stated in terms of time instead of load, this difference in not consequential;
the straightforward proof is given in an appendix.
\end{proof}

\vspace{-0.15in}
\subsection{Implications for the Age Distribution}
\vspace{-0.05in}
\label{sec:agedist}

In \cite{celisphd}, Celis derives the age distribution under Robin Hood
hashing by providing a recurrence.  We demonstrate that this
result also follows from our differential equations analysis.
We note that we use a different notation; the following theorem
corresponds to Theorem {3.1} of \cite{celisphd}.
\begin{theorem} 
\label{celistheorem}
In the asymptotic model for an infinite Robin Hood hash table
with load factor $\beta$ ($\beta < 1$), the fraction $s'_i(\beta)$
of cells containing keys of age at least $i$ is given by
\begin{eqnarray}
\label{srec}
s'_{i+1}(\beta) = 1 - (1 - \beta) e^{\sum_{j=1}^i s'_j(\beta)}.
\end{eqnarray}
\end{theorem}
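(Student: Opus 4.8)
The plan is to solve the differential equation~(\ref{si3}) for $s'_{i+1}$ directly and recognize the claimed formula as its closed-form solution. Throughout I write $S_i(\beta) = \sum_{j=1}^{i} s'_j(\beta)$, and I use that $p_{i+1} = \prod_{j=1}^{i} s'_j$, which follows by iterating the equilibrium relation~(\ref{pie}) starting from $p_1 = 1$. First I would record the base case: for $i=1$, Equation~(\ref{si3}) reads $\frac{ds'_1}{d\beta} = \frac{1-s'_1}{1-\beta}$ (since $p_1 = 1$), and with the empty-table initial condition $s'_1(0)=0$ this integrates to $s'_1(\beta) = \beta$, matching $1-(1-\beta)e^{0}$.

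For general $i$, I would treat~(\ref{si3}) as a first-order linear ODE in the single unknown $s'_{i+1}$, regarding its coefficient $p_{i+1}$ as a known function of $\beta$ (it depends only on $s'_1,\dots,s'_i$). Setting $u = 1 - s'_{i+1}$ converts it into $\frac{du}{d\beta} = -\frac{p_{i+1}}{1-\beta}\,u$, which separates and, using $u(0)=1$, gives
\[
u(\beta) = \exp\!\left(-\int_0^\beta \frac{p_{i+1}(\gamma)}{1-\gamma}\,d\gamma\right).
\]
It then remains only to evaluate this integral in closed form.

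The crux is a telescoping identity for $S_i$. From~(\ref{si3}) together with $p_{j+1} = p_j s'_j$ we have $p_j(1-s'_j) = p_j - p_{j+1}$, so $\frac{ds'_j}{d\beta} = \frac{p_j - p_{j+1}}{1-\beta}$; summing over $1 \leq j \leq i$ telescopes (using $p_1 = 1$) to
\[
\frac{dS_i}{d\beta} = \frac{1 - p_{i+1}}{1-\beta}.
\]
Rearranging yields $\frac{p_{i+1}}{1-\gamma} = \frac{1}{1-\gamma} - \frac{dS_i}{d\gamma}$, so the exponent integrates (using $S_i(0)=0$) to $-\ln(1-\beta) - S_i(\beta)$. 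Substituting back gives $u(\beta) = (1-\beta)e^{S_i(\beta)}$, that is, $s'_{i+1}(\beta) = 1 - (1-\beta)e^{S_i(\beta)}$, as claimed.

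I expect the telescoping identity to be the only nonroutine step, and it is precisely the payoff of working with the tail variables $s'_i$ and the product form of $p_{i+1}$: the summed derivatives collapse to a single term. One caveat I would flag is justifying the validity of the separated solution on $[0,\alpha]$; since the coefficient $1/(1-\gamma)$ is integrable for $\beta < 1$ and each $s'_j$ lies in $[0,1]$, there is no difficulty, but I would note this explicitly rather than leave the closed form unjustified.
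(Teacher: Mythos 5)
Your proposal is correct, and it runs the paper's argument in the opposite direction. The paper appeals to uniqueness of the solution of the system (\ref{si3}), then \emph{differentiates} the claimed recurrence (\ref{srec}) and checks that it satisfies (\ref{si3}); you instead \emph{integrate} (\ref{si3}) directly, writing it as a linear ODE for $u = 1 - s'_{i+1}$ with coefficient $p_{i+1}/(1-\beta)$ and solving with $u(0)=1$, thereby deriving the closed form rather than verifying it. The pivot in both arguments is the same identity: your telescoping computation $\frac{dS_i}{d\beta} = \frac{1-p_{i+1}}{1-\beta}$ is exactly the paper's ``useful fact'' $\sum_{j=1}^i \frac{ds'_j}{d\beta} = \left(1-\prod_{j=1}^i s'_j\right)/(1-\beta)$, which the paper proves by induction and you obtain more transparently by noting $p_j(1-s'_j) = p_j - p_{j+1}$ and summing. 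What your route buys: no separate appeal to uniqueness is needed, since you show that \emph{any} solution of the system with zero initial conditions must equal the closed form, so uniqueness of the relevant solution comes out as a byproduct, and the formula is derived rather than guessed-and-checked. What the paper's route buys: the verification step is purely mechanical differentiation, with no need to justify the integration step (though, as you correctly note, that justification is harmless here, since $1/(1-\gamma)$ is integrable on $[0,\beta]$ for $\beta<1$ and the linear ODE can be solved by an integrating factor without ever dividing by $u$).
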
 
\begin{proof}
As standard techniques can be used to show that our family of differential equations
has a unique solution, we show that the recurrence of Equation~(\ref{srec}) satisfies Equation~(\ref{si3}).
We first note the following useful fact:
$$\sum_{j=1}^i \frac{ds'_j}{d\beta} = \frac{1-\prod_{j=1}^i s'_j}{1-\beta}.$$
This follows easily from Equation~(\ref{si3}) by induction, using that $p_i = \prod_{j=1}^{i-1} s_j = \prod_{j=1}^{i-1} s'_j$,
as is easily derived from Equation~(\ref{pie}).  

Now taking the derivative of Equation~(\ref{srec}) we find
\begin{eqnarray*}
\frac{ds'_{i+1}(\beta)}{d\beta} & = & e^{\sum_{j=1}^i s'_j} - (1 - \beta) \left ( \sum_{j=1}^i \frac{ds'_j}{d\beta} \right ) e^{\sum_{j=1}^i s'_j} \\
& = &  \frac{1-s'_{i+1}}{1-\beta} - (1-s'_{i+1})\frac{1-\prod_{j=1}^i s'_j}{1-\beta} \\ 
& = &  \frac{ \left ( {\prod_{j=1}^i} s'_j \right ) (1-s'_{i+1})}{1-\beta} \\
& = &  \frac{p_{i+1}(1-s'_{i+1})}{1-\beta}.
\end{eqnarray*}
Hence the recurrence of Equation~(\ref{srec}) satisfies Equation~(\ref{si3}) as claimed.
\end{proof}

\vspace{-0.15in}
\subsection{Implications for Maximum Age}
\vspace{-0.05in}

We now show, following an approach established in \cite{MitzThesis},
that the fact that the growth of maximum age grows double
logarithmically in $n$ appears as a natural consequence of the differential
equations.  As noted, the general large deviation results we apply
only hold for finite-dimensional systems, so their application can only
apply up to constant ages.  (Of course, one can choose a very large constant
age, so our extension here is clearly primarily of theoretical interest.)
Explicitly proving an $O(\log \log n)$ bound on the maximum can be accomplished
by translating the differential equation argument to a layered induction
argument, successively bounding the fraction of cells holding keys of
age $i$ for each $i$.  While it does not appear motivated by the
differential equations approach, a previous argument in roughly the layered induction
style appears already in \cite{devroye2004worst}, formally providing a
$\log \log n +O(1)$ bound.  Their analysis is very different, however, as it
does not make use of the underlying Markov chain directly, but bounds the behavior
of what it calls the ``head'' and the ``belly'' of the process over stages to resolve
the collision that arises.  Our goal in this analysis is to first show how the
fluid limit analysis provides novel insight into how the doubly exponential decrease
in the age distribution arises.  We then use the fluid limit argument as a guide, leading 
to an alternative (and we believe somewhat simpler) layered induction proof
for a $\log \log n +O(1)$ bound on the maximum age.

\begin{theorem} 
In the asymptotic model for an infinite Robin Hood hash table
with load factor $\alpha < 1$, for sufficiently
large constants $i$, the fraction $s'_i(\alpha)$
of cells that contain keys of age at least $i$ satisfies
\begin{eqnarray}
s'_{i}(\alpha) \leq c_1 c_2^{2^{i-c_3}}
\end{eqnarray}
for some constants $c_1,c_3 > 0$ and $c_2 < 1$ that may depend on $\alpha$.
\end{theorem}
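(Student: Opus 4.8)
The plan is to collapse the recurrence of Theorem~\ref{celistheorem} into a clean two-term recurrence and then observe that this recurrence forces quadratic decay, which iterates into a doubly exponential bound. First I would take two consecutive instances of Equation~(\ref{srec}): since $1-s'_{i+1} = (1-\beta)e^{\sum_{j=1}^i s'_j}$ and $1-s'_i = (1-\beta)e^{\sum_{j=1}^{i-1} s'_j}$, dividing one by the other cancels the load factor and the bulk of the sum, leaving
\[
1 - s'_{i+1} = (1 - s'_i)\,e^{s'_i},
\]
with the base case $i=1$ consistent with $s'_1 = \beta$. This single identity is the engine of the whole argument: it expresses each tail value purely as a function of the previous one, $s'_{i+1} = f(s'_i)$ where $f(\delta) = 1 - (1-\delta)e^{\delta}$.

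Next I would analyze the map $f$. A short computation gives $f'(\delta) = \delta e^\delta$, hence $f(\delta) = \int_0^\delta t e^t\,dt$, which is nonnegative and increasing on $[0,1]$. Two facts follow at once. First, $f(\delta) \le \delta$ on $[0,1)$ (equivalently $e^{\delta} \ge 1$), so the sequence $s'_i$ is monotonically nonincreasing; and since the only fixed point of $f$ in $[0,1)$ is $0$, we have $s'_i \to 0$. Consequently there is a finite index $i_0$, depending only on $\alpha$, with $s'_{i_0} < 1/2$; this $i_0$ will play the role of $c_3$. Second, bounding the integral by $f(\delta) \le e^{\delta}\int_0^\delta t\,dt = e^{\delta}\delta^2/2$ gives $f(\delta) \le 2\delta^2$ for every $\delta \le 1$, so that
\[
s'_{i+1} \le 2\,(s'_i)^2 \qquad \text{for all } i .
\]

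Finally I would iterate this quadratic bound starting from $i_0$. Setting $a_i = 2 s'_i$ turns $s'_{i+1} \le 2(s'_i)^2$ into $a_{i+1} \le a_i^2$, and since $a_{i_0} = 2 s'_{i_0} < 1$, unrolling yields $a_i \le a_{i_0}^{\,2^{\,i - i_0}}$ for all $i \ge i_0$. Translating back gives
\[
s'_i(\alpha) \le \tfrac12\,\bigl(2 s'_{i_0}(\alpha)\bigr)^{2^{\,i - i_0}},
\]
which is exactly the claimed form with $c_1 = 1/2$, $c_2 = 2 s'_{i_0}(\alpha) < 1$, and $c_3 = i_0$ (all depending on $\alpha$ through $i_0$).

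I do not expect a serious conceptual obstacle: once the two-term recurrence is extracted, the doubly exponential decay is essentially forced, matching the paper's contention that it is a natural consequence of the underlying equations. The only point needing genuine care is quantitative rather than structural, namely choosing the threshold ($1/2$) and the rescaling ($a_i = 2s'_i$) so that the iterated base $c_2$ is strictly below $1$; this is what upgrades the conclusion from mere boundedness to a genuinely convergent double exponential, and it is the step where the constants must be tracked rather than waved away.
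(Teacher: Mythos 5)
Your proof is correct, but it takes a genuinely different route from the paper's. The paper argues directly on the time-domain differential equations: with $u = -\ln(1-\alpha)$ it takes the first index $j$ with $s_j(u) < 1/u$ (existence justified by geometric decay of the tails, citing \cite{devroye2004worst}), sets $\nu^* = s_j(u)\,u < 1$, and runs an induction that alternately bounds $p_{j+k} \leq (\nu^*)^{2^{k-1}-1}s_j(u)$ and integrates $ds_{j+k}/dt \leq p_{j+k}$ over $[0,u]$ to conclude $s_{j+k}(u) \leq (\nu^*)^{2^{k-1}}$. You instead post-process the closed-form recurrence of Theorem~\ref{celistheorem}: dividing consecutive instances of Equation~(\ref{srec}) collapses it to the one-dimensional iteration $s'_{i+1} = f(s'_i)$ with $f(\delta) = 1-(1-\delta)e^{\delta}$, and the doubly exponential decay then follows from $f(\delta) \leq 2\delta^2$ together with convergence of the monotone orbit to the unique fixed point $0$ in $[0,1)$; all the steps check out, including the key cancellation $1-s'_{i+1} = (1-s'_i)e^{s'_i}$ and the rescaling $a_i = 2s'_i$ that makes the squaring iterate cleanly. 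Your route buys self-containedness and structure: it needs no external lemma to locate the starting index (the fixed-point argument supplies it), it yields explicit constants $c_1 = 1/2$, $c_2 = 2s'_{i_0}(\alpha)$, $c_3 = i_0$, and it exposes the pleasant fact that the entire tail sequence is the orbit of a single load-independent map, the load entering only through the initial value $s'_1 = \beta$. The paper's route buys transferability: staying inside the ODE/level-process machinery is what lets essentially the same induction template be reused as the finite-$n$ layered-induction proof of the $\log\log n + O(1)$ maximum-age bound (where no closed form is available and the $p$'s and $s$'s must be bounded separately), and again for the deletion model of Theorem~\ref{thm:dex2}, which has only equilibrium relations rather than a solved recurrence.
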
 
\begin{proof}
In what follows here let $u = -\ln(1-\alpha)$.  
Let $j$ be the smallest value such that $s_j(u) < u^{-1}$.
Let $s_j(u) = \nu$ and $\nu \cdot u = \nu^* < 1$.  
We remark that while it may not be immediately clear that the $s_j$ go to 0,
it is shown in \cite[Lemma 4]{devroye2004worst} that the $s_i$ have geometrically decreasing tails,
so that $j$ is in fact a constant.  (Alternatively, since the $p_j$ clearly have geometrically
decreasing tails, it follows readily that the $s_j$ 
do as well.)
Now below we use the differential equations
based on time, up until time $u$, so $t \leq u$.    
As $p_{j+1} \leq s_j$, and $s_j$ is increasing over time,
$$\frac{ds_{j+1}(t)}{dt} = p_{j+1}(1-s_{j+1}) \leq s_j \leq \nu.$$  
To reach load $\alpha$ we run for time $u = \ln \frac{1}{1-\alpha}$, and hence 
$$s'_{j+1}(\alpha) =  s_{j+1}(u)
\leq \nu \cdot u = \nu^*.$$   

Inductively, we now find by the same argument that for $k \geq 1$,
$$p_{j+k} \leq (\nu^*)^{2^{k-1}-1}\nu \ \ ; \ \ s_{j+k} \leq (\nu^*)^{2^{k-1}},$$
and the result follows.  That is, 
$$p_{j+k} = p_j \prod_{\ell=0}^{k-1} s_{j+\ell} \leq \nu \prod_{\ell=1}^{k-1} s_{j+\ell} \leq (\nu^*)^{2^{k-1}-1}\nu,$$
where the last step follows from the inductive hypothesis.  Further, as
$$\frac{ds_{j+k}(t)}{dt} \leq p_{j+k} \leq (\nu^*)^{2^{k-1}-1}\nu,$$ we
have 
$$s'_{j+k}(\alpha) \leq (\nu^*)^{2^{k-1}-1}\nu u = (\nu^*)^{2^{k-1}}.$$
The theorem follows.
\end{proof}

As one might hope from previous work (e.g., \cite{MitzThesis}), this fluid
limit argument can be transformed into a layered induction argument to prove
a $\log \log n +O(1)$ upper bound on the maximum age, as we now show.   

\begin{theorem}
Let $M_n$ be the maximum age in a Robin Hood hash table with $n$ cells  and load
factor $\alpha < 1$.  There is a constant $C$ depending only on $\alpha$
such that 
$$\lim_{n \rightarrow \infty} \Pr(M_n \geq \log \log n +C) = 0.$$
\end{theorem}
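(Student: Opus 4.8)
The plan is to translate the fluid-limit recursion, in which the fraction of cells of age at least $i$ roughly squares each time $i$ increases by one, into a layered induction on the actual finite-$n$ process. For each $i$ let $N_i$ denote the number of cells that ever hold a key of age at least $i$ over the course of the insertions, and let $P_i$ denote the number of placement steps at which the key being placed has age at least $i$. Since a cell attains age at least $i$ only when a key of age at least $i$ settles there, we have $N_i \le P_i$; moreover $M_n \ge i$ forces $P_i \ge 1$, so it suffices to exhibit an $i = \log\log n + C$ with $\Pr(P_i \ge 1) = o(1)$.

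The engine of the induction is the observation underlying Equation~(\ref{pie}): the key being placed at a step has age at least $i+1$ only if the previous step placed a key of age at least $i$ and that step's probe landed in a cell of age at least $i$. Since the set of cells of age at least $i$ only grows over time, its size never exceeds the final value $N_i$, and because $N_i \le P_i$, conditioning on $P_i \le n\beta_i$ makes the fraction of age-$\ge i$ cells at most $\beta_i$ at all times. Hence each of the $P_i$ relevant probes is an independent hit on an age-$\ge i$ cell with probability at most $\beta_i$, so $P_{i+1}$ is stochastically dominated by $\mathrm{Bin}(P_i,\beta_i)$; a Chernoff bound then gives $P_{i+1} \le 2n\beta_i^2$ with high probability, and the sequence $\beta_{i+1} = 2\beta_i^2$ propagates the bound. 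Iterating this near-squaring from a constant base level $i_0$ at which $\beta_{i_0} < 1/2$ — such a level exists because the $p_i$, and hence the $s_i$, have geometrically decreasing tails, exactly as used in the previous theorem — yields $\beta_{i_0+k} \le c^{2^{k}}$ for a constant $c < 1$, the doubly exponential decay predicted by the fluid limit.

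I would then stop the induction at the first level $i^\star$ where $n\beta_{i^\star} < 1$. Because $\beta_{i_0+k}$ decays like $c^{2^k}$, this occurs at $i^\star = \log_2\log n + O(1) = \log\log n + O(1)$, the additive constant absorbing the base level $i_0$ and the factor $\ln(1/c)$. At that level the expected number of age-$\ge i^\star$ cells is below $1$, so a first-moment (Markov) bound gives $\Pr(P_{i^\star}\ge 1)=o(1)$, and a union bound over the failure events of the $O(\log\log n)$ inductive steps finishes the proof once those failures are shown to sum to $o(1)$.

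The main obstacle is precisely this union bound over levels. At small $i$ the counts $P_i$ are $\Theta(n)$ and the multiplicative Chernoff step fails only with probability $\exp(-\Omega(n))$, which is harmless; but as $i$ approaches $\log\log n$ the target $n\beta_i$ falls to polylogarithmic and then to constant size, where multiplicative concentration degrades and the clean high-probability recursion $\beta_{i+1}=2\beta_i^2$ can no longer be sustained. The fix is to switch regimes: once $n\beta_i$ drops below a threshold such as $\log^2 n$, I would abandon the Chernoff step and argue directly by the first moment, bounding $\Pr(P_{i+1}\ge 1)$ via $\E[P_{i+1}\mid P_i\le n\beta_i]\le n\beta_i^2$ and iterating this expectation bound, which continues to square the effective bound without requiring concentration all the way to $i^\star$. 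Some care is also needed to make the conditioning rigorous — exposing the probes in time order so that the ``hit probability at most $\beta_i$'' statement holds against the natural filtration — but this is routine in the random-hash model, where distinct probes are independent and uniform.
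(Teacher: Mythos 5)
Your core engine is sound, and it is essentially the same strategy as the paper's proof: a layered induction over ages on the number of placement steps whose key has age at least $i$, seeded at a constant base level, driven by the squaring recursion, and requiring special treatment once the counts get small. Your execution of the inductive step is in fact arguably cleaner than the paper's: where the paper introduces auxiliary dominating level-chains $C$ and $C'$ and applies Azuma--Hoeffding to a Doob martingale to handle the dependence between steps, you dominate $P_{i+1}$ directly by a binomial with at most $n\beta_i$ trials and success probability $\beta_i$, which is legitimate because each probe is fresh uniform randomness against the natural filtration and $S_i(t)\leq N_i\leq P_i$ by monotonicity of ages under insertions. Your base case is under-justified as stated --- fluid-limit geometric tails do not transfer to finite $n$ by themselves --- but your own machinery supplies it: every probe hits an occupied cell with conditional probability at most $\alpha$, so $P_{i+1}$ is dominated by $\mathrm{Bin}(P_i,\alpha)$ and a constant number of levels drives $\beta_{i_0}$ below $1/2$. (The paper instead gets its base level deterministically, from the fact that the total age after $nu$ steps is at most $nu$.)

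The genuine gap is your endgame. Multiplicative Chernoff for $\mathrm{Bin}(\lfloor n\beta_i\rfloor,\beta_i)$ fails far earlier than $n\beta_i\approx\log^2 n$: the mean is $n\beta_i^2=(n\beta_i)\beta_i$ and the failure probability is $\exp(-\Omega(n\beta_i^2))$, so concentration is already dead when $n\beta_i=\sqrt{n}$ (mean $1$), and it is not summably small once $n\beta_i^2=O(\log n)$, i.e.\ once $n\beta_i$ is of order $\sqrt{n\log n}\gg\log^2 n$. Conversely, if you could reach $n\beta_i\leq\log^2 n$ with high probability, no iteration would be needed at all: one first-moment step gives $\E[P_{i+1}]\leq n\beta_i^2=\log^4 n/n=o(1)$ and Markov finishes. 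So the entire difficulty sits in the window $\log^2 n\leq n\beta_i\leq\sqrt{n\log n}$, which your plan does not cover; and the patch you propose, ``iterating the expectation bound,'' does not parse there, because the recursion is quadratic in the random variable: the honest inequality is $\E[P_{i+1}]\leq\E[P_i^2]/n$, and Jensen goes the wrong way, so expectations cannot be iterated without an intermediate high-probability truncation --- exactly the concentration that is missing in this window. The hole is cheap to fix: from $n\beta_i\approx\sqrt{n\log n}$ the next level has mean $O(\log n)$, and the crude tail $\Pr(\mathrm{Bin}(m,p)\geq k)\leq(emp/k)^k$ with $k=\log^2 n$ (or conditional Markov with a $\log n$ slack factor, paying failure probability $1/\log n$) gets you to $n\beta\leq\log^2 n$, after which the first-moment step closes the argument. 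The paper crosses the same window differently: it stops the induction at the first level whose bound is $n^{-2/5}$ and then notes that a key three levels older would require three consecutive probes hitting cells of that age, an event of probability at most $(2n^{-2/5})^3\cdot O(n)=o(1)$. Either patch costs only $O(1)$ levels, absorbed into the constant $C$; but as written your argument has a gap in that window.
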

\begin{proof}  
Instead of thinking about the Robin Hood hash table, we work
with the Markov process we have been considering, where
here we take the time $t$ to be the number of unscaled time steps,
$S_i(t)$ to be the number of cells holding a key of age at least $i$ at
time $t$ (where we include the key being placed in the count if its age
is at least $i$), $s_i(t)$ to be $S_i(t)/n$, and $P(t)$ to be the age of the
key being placed at the $t$th time step.  The idea of the proof is to
start with a bound on $S_j(t)$ at the end of the process for some
useful starting value $j$, and use this to bound the number of steps
for which $P(t) \geq j+1$ over the course of the process.  With this, we in turn bound the number of
steps that yield a new key of age at least $j+1$ to bound $S_{j+1}(t)$
at the end of the process, providing our induction.  This induction,
which uses Chernoff bounds, breaks down at some age (where the Chernoff
bound will no longer readily apply), at which point
we explicitly have to carefully cap off the induction.  We follow
the intuition from our argument for the limiting system.  Finally,
we note for this proof we have not attempted to optimize the constant $C$.

Here we let $u = -\ln(1-(\alpha+\epsilon_0))$ for a small constant
$\epsilon_0 < (1-\alpha)/2$.  We run the process for up to
$nu$ steps; however, once $\alpha n$ keys are placed, we allow the
process to stop, and no action is taken in the remaining time steps.
(Recall that $\alpha n$ keys should be placed after  $-\ln(1-(\alpha))n +o(n)$
steps in expectation.)
Let ${\cal G}$ be the event that $\alpha n$ keys have been placed
after $nu$ steps.  Using standard martingale arguments, one can easily
show $\Pr({\cal G}) = 1-o(1)$; we return to this point at the end of
the analysis.

For convenience we take the case where $u > 1/2$, so $j \geq 2$; this
suffices as smaller $\alpha$ and hence values of $u$ will have
strictly smaller load.
After $nu$ steps, the total age of all keys in the table is at most
$nu$.  Therefore the fraction of keys in the table with age at least
$j=\lceil 16u^3 \rceil$ is at most $(4u)^{-2})$,  
so we deterministically have $s_j(nu) \leq (4u)^{-2}$.  As a warm up, following the 
fluid limit argument, let $X^{j+1}_t$ be 1 if $P(t) \geq j+1$
and 0 otherwise.  For $P(t) \geq j+1$, the cell chosen at the
previous step must have age at least $j$.  For each such time step,
the probability such a cell was chosen is dominated by an independent
Bernoulli trial of probability $s_j(nu) \leq (4u)^{-2}$.  Hence
if we let $X_{j+1} = \sum_{t=1}^{nu} X^{j+1}_t$, we have 
$$\Pr(X_{j+1} \geq (2nu)(4u)^{-2}) \leq \Pr(B(nu,(4u)^{-2}) \geq (2nu)(4u)^{-2}) \leq
\e^{-(nu)(4u)^{-2}/3}.$$
The last part of this equation follows from a standard Chernoff bound
(e.g., \cite{MU}[Theorem 4.4]).  

In turn, $S_{j+1}(t)$ can only increase when $P(t) \geq j+1$, and hence, 
with high probability $S_{j+1}(nu) \leq (2nu)(4u)^{-2}$, so 
$s_{j+1}(nu) \leq (4u)^{-1}/2$.  (This bound is weaker than just taking 
$s_{j+1}(nu) \leq s_{j}(nu)(4u)^{-2}$, but will suffice for our induction.)

Let $X_{j+k}$ similarly be the number of steps where $P(t) \geq
j+k$.  For $k\geq 1$, let $\gamma_k = (4u)^{-2^{k-1}}$;
let $\gamma_0 =  (4u)^{-2}$.  We
inductively show that with high probability
$s_{j+k}(nu) \leq \gamma_k$ by showing $X_{j+k} \leq n \gamma_k$,
as $X_{j+k}$ clearly bounds the number of keys with age $j+k$
placed in the table.  To take care of the conditioning issues
arising in the induction, we establish a simpler but useful
dominating process.  Consider a process $C$ with states $j, \ldots, j+k$
that initially starts at state $j$ and transitions from state
$a$ to $b$ with probability $q(a,b)$ at each time step as follows:
\begin{eqnarray*}
q(a,b) & = & \left\{ \begin{array}{ll}
              s_a(t) & \mbox{for } b = a+1; \\
              s_{j+k}(t) & \mbox{for } a= b= j+k; \\
              s_b(t) -s_{b+1}(t) & \mbox{for } j < b \leq a \leq j+k \mbox{ (except as above)}; \\
              1 - s_{j+1}(t) & \mbox{for } b=j;
              \end{array} \right.
\end{eqnarray*}
That is, our process $C$ behaves like the age of the key
being placed in our original chain, except that the state simply
returns to state $j$ as a default (instead of some smaller age)
and the maximum state we are concerned with is $j+k$.
Because of this, the number of steps where $C$ 
reaches state $j+k$ over $nu$ times steps stochastically dominates
the number of times the original process has $P(t) \geq j+k$.
Also, in order to have 
for $P(t+k) \geq j+k$ in our original process, the key being placed at time $t$ must have
reached age $j$, and then, over the next $k$ steps, the ages of the cells
chosen must be at least $j,j+1,\ldots,j+k-1$.
Similarly, for process $C$ to reach state $j+k$ the states over the last $k$ steps must 
successively increment by 1 through $j,j+1,\ldots,j+k-1$.  

Let $Y_{j+k}$ be the number of steps where $C$ is in state $j+k$
when starting from state $j$ over $nu$ steps.  
By the dominance of chain $C$, 
\begin{eqnarray*}
\Pr(X_{j+k} \geq \gamma_k) & \leq & \Pr(Y_{j+k} \geq \gamma_k).
\end{eqnarray*}
In bounding the quantity on the right hand side, we will want to 
focus on the high probability the case that the $s_{j+i}(t)$ values
are well-behaved.  In particular, 
let ${\cal E}_k$ be the event that $s_{j+i}(t) \leq \gamma_i$ for $1 \leq i \leq k$
and all $t \leq nu$.
Note above we have shown that ${\cal E}_1$ holds with probability $1/n^2$;
as part of our proof we show inductively that ${\cal E}_i$ holds with probability $i/n^2$ for $i$
up to a value $k^*$ where $\gamma_{k^*} \leq n^{-2/5}$.  
Our argument will show that $k^* = \log \log n +O(1)$.
At that point we switch to an explicit argument to finish the analysis.

To bound $Y_{j+k}$, we must consider yet another process $C'$, which dominates
$C$ conditioned on ${\cal E}_{k-1}$.  Specifically, for process $C'$, 
again the are states $j, \ldots, j+k$, but the transition from
$a$ to $b$ occurs with probability $q'(a,b)$ at each time step as follows:
\begin{eqnarray*}
q'(a,b) & = & \left\{ \begin{array}{ll}
              \gamma_{a-j} & \mbox{for } b = a+1; \\
              \gamma_{k} & \mbox{for } a= b= j+k; \\
              \gamma_{b-j} & \mbox{for } j < b \leq a \leq j+k \mbox{ (except as above)}; \\
              \mbox{all remaining probability} & \mbox{for } b=j;
              \end{array} \right.
\end{eqnarray*}
Note $q'(a,b)$ defines a proper probability distribution.
Let $Z_{j+k}$ be the number of steps where $C'$ is in state $j+k$
when starting from state $j$ over $nu$ steps.  Then, conditioned on   
${\cal E}_{k-1}$, $Z_{j+k}$ stochastically dominates $Y_{j+k}$ by
a simple coupling, since $q'(a,b) \geq q(a,b)$ whenever $b > j$.  

The benefit working with $C'$ is that the behavior is fixed over time,
while $C$ depends on the $s_j(t)$ values.  This helps us as follows.
We now bound $Z_{j+k}$.  Because of the dependence in the state our process, we must use a
more powerful tail bound; here we use a martingale approach.  
Let $A_i$ be the probability that
be a binary random variable that is $1$ if and only if the process $C'$
is in state $j+k$ at the $i$th step, and 0 otherwise.  We can use a
standard Doob martingale (see e.g. \cite{MU}[Chapter 12.1), where $Z_{j+k} = \sum_{i=1}^{nu}
A_i$.  Note that the $A_i$ are not independent; however, the outcome of each time
step of the process affects at most $2k$ values of $A_i$, and hence
the outcome at each time step changes $A$ by at most $2k$.
Hence, using the standard Azuma-Hoeffding inequality (see e.g. \cite{MU}[Theorem 12.4]), we have
\begin{eqnarray*}
\Pr(Z_{j+k} \geq 2 \E[Z_{j+k}] & \leq & \Pr(|Z_{j+k} - \E[Z_{j+k}]| \geq \E[Z_{j+k}]) \\
& \leq &  2\mbox{exp}(-\E[Z_{j+k}]^2/(8nk^2))
\end{eqnarray*}
Again, to reach state $j+k$ the states over the last $k$ steps must 
successively increment by 1 through $j,j+1,\ldots,j+k-1$.  
Hence
\begin{eqnarray*}
\E[Z_{j+k}] & \leq & (nu) \prod_{\ell=0}^{k-1} \gamma_{j+\ell} \\
                       & = & \frac{n}{4} (4u)^{-2^{k-1}}.
\end{eqnarray*}
Now as stated before let $k^*$ be the smallest value so that 
$\gamma_{k^*} \leq n^{-2/5}$.  Note from the definition of $\gamma_k$ that
$k^* = \log \log n +O(1)$.
It follows that, as long as $k < k^* < \log n$, so that $\E[Z_{j+k}] \geq n^{3/5}$,
\begin{eqnarray*}
\Pr(Z_{j+k} \geq n\gamma_{k}) \leq \Pr(Z_{j+k} \geq 2 \E[Z_{j+k}]) \\
                                \leq 2\e^{-n^{1/5}/(8\log^2 n)} << 1/n^2.
\end{eqnarray*}
In particular, note that, inductively,
\begin{eqnarray*}
\Pr(Y_{j+k} \geq \gamma_k) & \leq & \Pr(\neg{\cal E}_{k-1}) + \Pr(Y_{j+k} \geq \gamma_k~|~{\cal E}_{k-1}) \\
                           & \leq & \frac{k-1}{n^2} + \Pr(Z_{j+k} \geq \gamma_k) \\
                           & \leq & \frac{k-1}{n^2} + \frac{1}{n^2} = \frac{k}{n^2}.
\end{eqnarray*}
Also
\begin{eqnarray*}
\Pr(\neg{\cal E}_{k}) & \leq & \Pr(\neg{\cal E}_{k-1}) + \Pr(\neg{\cal E}_{k}~|~{\cal E}_{k-1}) \\
                      & \leq & \frac{k-1}{n^2} + \frac{1}{n^2} = \frac{k}{n^2}.
\end{eqnarray*}
Now for $k^*$, we can follow the same argument to show that 
\begin{eqnarray*}
\Pr(Z_{j+k^*} \geq 2 n^{3/5} & \leq & \Pr(|Z_{j+k} - \E[Z_{j+k}]| \geq n^{3/5}) \\
& \leq &  2\e^{-n^{1/5}/(8\log^2 n)} << 1/n^2.
\end{eqnarray*}
Hence, if we let  ${\cal F}$ be the event that $s_{j+k^*}(t) \leq 2n^{-2/5}$,
we find again
\begin{eqnarray*}
\Pr(\neg{\cal F}) & \leq & \Pr(\neg{\cal E}_{k^*-1}) + \Pr(\neg{\cal F}~|~{\cal E}_{k-1}) \\
                      & \leq & \frac{k^*-1}{n^2} + \frac{1}{n^2} = \frac{k^*}{n^2}.
\end{eqnarray*}
Now, finally, we note that for any key to have age $j+k^*+3$, the
Robin Hood hashing scheme must choose 3 cells with load at least $j+k^*$ over sequential steps.
Conditioned on ${\cal F}$, the probability that this event, which we denote by ${\cal H}$ occurs is at most $(2n^{-2/5})^3 \cdot (un) = o(1)$.
We conclude
\begin{eqnarray*}
\Pr(M_n \geq j+k^*+3)& \leq & \Pr(M_n \geq j+k^*+3 \mbox{ and } {\cal G}) + \Pr(M_n \geq j+k^*+3 \mbox{ and }\neg{\cal G}) \\
                       & \leq & \Pr({\cal H}) + \Pr(\neg{\cal G}) \\
                       & \leq & \Pr({\cal H}~|~{\cal F}) + \Pr(\neg{\cal F}) + \Pr(\neg{\cal G}) \\
                       & \leq & (2n^{-2/5})^3 \cdot (un) + \frac{k^*}{n^2} + o(1) = o(1).
\end{eqnarray*}
Here $j+k^*+3$ is $\log \log n +C$ for a constant $C$, so the theorem is proven.
\end{proof}

\vspace{-0.15in}
\subsection{Implications for Unsuccessful Search Times}
\vspace{-0.05in}

For insertions-only tables, there is a simple optimization that speeds
up unsuccessful searches over standard search for Robin Hood hashing.  If the key being searched for is at the
$i$th position in its probe sequence, and the probe yields a cell with
a key with age strictly less than $i$, then the key cannot be in the
table.  This is because if the key were in the table it would have
replaced the younger key in this cell.  Making use of this fact 
allows us to short-circuit an unsuccessful search early, before reaching 
an empty cell.  With this optimization, the probability that an unsuccessful search takes
at least $j$ probes (up to the longest probe sequence in the system)
is $\prod_{k=1}^{j-1} s_k$, since on the $k$th probe it would need to find
a cell with age at least $k$ to continue.  In the asymptotic limit, we have
that the expected number of probes for an unsuccessful search is thus
$$\sum_{j=1}^\infty \prod_{k=1}^{j-1} s_k = \sum_{j=1}^\infty j p_j.$$
That is, the expected number of probes for an unsuccessful search is just
the expected age according to the equilibrium distribution given by the $p_j$ 
at the final load $\alpha$.

This brings up a point that we have not mentioned previously.  For
Robin Hood hashing, it is useful to keep track of the age of each key
with each key.  This may require a small number of extra bits per
cell, which is not unreasonable if keys are large.  As we will see,
even for high loads, 3 bits may be sufficient, and 4 bits handles most
insertion-only cases in practice.  Of course one can always re-derive
the age of a key on the fly by re-computing hash values from the key,
but we expect keeping the age of the key would prove more
efficient.  

\vspace{-0.15in}
\section{Handling Deletions}  
\vspace{-0.05in}

In this section, we demonstrate that we can analyze a standard model for deletions whereby we
load a number of keys into the system, and then alternate between
deleting a key chosen uniformly at random from the table, and inserting
new fresh key.  Our goal is to use the differential equations
analysis to consider the long-term behavior and the steady state (if
one exists) of such a system.  We show that we can analyze the
deletion method described in Celis's thesis \cite{celisphd}, which
is the only deletion scheme we know of previously proposed for Robin Hood hashing.
We then introduce and analyze another deletion scheme that we believe could be more 
suitable in practice.  Further discussion comparing the two
approaches appears with simulations appears in the appendix.
Our scheme is much simpler, and interestingly, the equations we
derive to model their behavior are simpler as well.  


\vspace{-0.15in}
\subsection{Deletions with Tombstones}
\vspace{-0.05in}

Celis suggests a scheme for deletions based on a standard approach of
using tombstone entries.  Cells with deleted keys are marked; such
marked entries are called tombstones.  For insertion and search
purposes deleted entries are treated the same as non-deleted entries,
except that when a deleted key is replaced by younger key on an
insertion, the deleted key can be discarded and does not need to be
put back in the table.  As an optimization, if the age of a key being
inserted is equal to the age of a key in a deleted entry, it can
replace the deleted entry.  This maintains the property that if a key
is placed in the cell given by the $i$th entry in its probe sequence,
then for $1 \leq j < i$ there is a key (which may be a deleted key in
a tombstone cell) with age
at most $j$ in the cell corresponding to the $j$th entry of the key's
probe sequence.

This scheme has the obvious problem that the ages of keys in the table
can only become larger.  Hence, there is no ``steady state'' 
for the age of keys.  However, we can still model its behavior, as we
show here.  

We model this deletion strategy in the setting where we first load the
system to a load $\alpha$, and then alternate between deleting a
random key and inserting a new key.  Entirely similar analysis can be
used for the setting where elements in the system all have an
exponentially distributed lifetime with a fixed mean and arrivals of
new keys form a Poisson process of a given rate, in which case
$\alpha$ represents the equilibrium load of the system.  This is a
natural model for studying deletions; for example, it was used in
Celis's thesis, as well as other works.  While this model does not
capture possible worst-case behaviors, and in particular does not
handle re-insertion of deleted keys, it remains a useful model for 
understanding basic performance.  Of course, it also fits well with our
analysis:  we start by running the original set of differential equations
to load $\alpha$, and then from the resulting state we run a new
set of differential equations that take deletions into account.  

We first describe changes to how we think about the state space.
Our level process will require an additional state, call it
state 0, which corresponds to the state when we perform a deletion. 
Also, while we use $s_i$ to again represent the fraction of cells
in the table that have an {\em undeleted} key of age at least $i$, we also
need to track the fraction of cells in the table that are tombstones 
containing {\em deleted} keys of age at least $i$.  Let us refer to 
these as $u_i$.  

We first consider the level process.  
Let us think in unscaled time steps.  Let $q(t)$ be the probability
of being in state 0 -- that is, that we are about to perform a deletion
-- and as before for $i \geq 1$ let $p_i(t)$ be the probability that we are trying
to place a key of age at least $i$ in the table.
The state of the level process is a Markov chain
assuming that the state of the table is fixed at certain values $s_i$ and $u_i$.

The equations for $p_i(t)$ and $q(t)$ are as follows:
\begin{eqnarray}
p_1(t+1) & = & 1- \sum_{j=1}^\infty (p_j(t) - p_{j+1}(t)) (1-s_1 -u_{j+1}); \\
p_i(t+1) & = & p_{i-1}(t) s_{i-1} + \sum_{j=i-1}^\infty (p_j(t) - p_{j+1}(t)) u_{j+1} \ , \ i \geq 2;\\
q(t) & = & \sum_{j=1}^\infty (p_j(t) - p_{j+1}(t)) (1-s_1 -u_{j+1}).
\end{eqnarray}
Our equation for $p_i$ now has an additional term that takes into
account that a key will be placed in a cell with a deleted key when
the deleted key's age is less than or equal to the age of key being
placed.  Similarly, the equation for $q_j$ is based on summing over
each possible age $j$ of the key being placed the probability that it is
placed successfully.  This gives the equilibrium equations:
\begin{eqnarray}
p_1 & = & 1- \sum_{j=1}^\infty (p_j - p_{j+1}) (1-s_1 -u_{j+1}) ;\\
p_i & = & p_{i-1} s_{i-1} + \sum_{j=i-1}^\infty (p_j - p_{j+1}) u_{j+1}  \ , \ i \geq 2 ;\\
q & = & \sum_{j=1}^\infty (p_j - p_{j+1}) (1-s_1 -u_{j+1}).
\end{eqnarray}
We now turn to equations for $s_i$ and $u_i$.  We find
\begin{eqnarray}
\label{sicelis}
\frac{ds_i}{dt} & = & \sum_{j=i}^\infty (p_j - p_{j+1}) (1-s_i -u_{j+1}) - \frac{qs_i}{s_1} ;\\   
\label{uicelis}
\frac{du_i}{dt} & = & \frac{qs_i}{s_1} - \sum_{j=i}^\infty p_j (u_j - u_{j+1}).
\end{eqnarray}
Note that, in the above, the values $s_1$ in the denominator of the terms
of the form $\frac{qs_i}{s_1}$ can be replaced by $\alpha$, since $s_1$ is $\alpha$
once we start performing deletions under our model.

While this system appears inherently infinite-dimensional, it proves
very accurate as shown in simulations.  Again, due to lack of space, further discussion is given
in the full version of the paper.

\vspace{-0.15in}
\subsection{Deletions without Tombstones}
\vspace{-0.05in}

We now suggest and analyze a simpler scheme that we believe remains
effective based on our analysis and simulations.  Deleted entries are
simply deleted from the table; no tombstones are used.  A problem with
this approach is that one can no longer use an empty cell as a
stopping criterion for an unsuccessful search.  Similarly, we no
longer have the property that on a search an occupied cell must have a
key of age at most $i$ on the $i$th probe, or we can declare the
search unsuccessful.  The only way to cope with unsuccessful searches
is to keep track of largest age of any key currently in the system.
This can be done by having the table keep counters of the number of
keys of each age.

As before, let us first consider the level process.  As a reminder
we work in the model where we load the system to a load $\alpha$, and
then alternate between deleting a random key and inserting a new key.  
Our level process will therefore require an additional state, call it
state 0, which corresponds to a deletion.  Let us think in unscaled time steps.  Let $q(t)$ be the probability
of being in state 0 -- that is, that we are about to perform a deletion
-- and as before for $i \geq 1$ let $p_i(t)$ be the probability that we are trying
to place a key of age at least $i$ in the table.
The state is again a Markov chain,
assuming that the state of the table is fixed at certain values $s_i$.

First, let us consider $q(t)$.  When we are placing an item, we complete 
the placement with probability $1-s_1$, the probability of finding a cell
without a key (either from deletion or from being empty).  Note that $1-s_1$
is $1-\alpha$ based on our model.  Hence
$$q(t+1) = p_1(t) (1-\alpha).$$
However, $p_1(t) = 1 - q(t)$.  Substituting gives
$$q(t+1) = (1 - q(t)) (1-\alpha).$$
Letting $q$ be the equilibrium probability for the chain for $q(t)$, we have
$$q = \frac{1-\alpha}{2-\alpha}.$$
Note that this gives the equilibrium probability
$$p_1 = \frac{1}{2-\alpha}.$$
Finally, as with the original Robin Hood process, we have for $i \geq 2$ that
\begin{eqnarray}
\label{pix}
p_i = p_{i-1} s_{i-1}.
\end{eqnarray}

With this we turn our attention to the limiting equations for the $s_i$.  
Note that $s_1$ increases whenever an available cell is found for a placement,
and $s_1$ decreases whenever a deletion occurs.  Hence
\begin{eqnarray}
\label{s1d}
\frac{ds_1}{dt} = p_1(1-s_1) - q.  
\end{eqnarray}
Note that we have $\frac{ds_1}{dt} = 0$ when 
$s_1 = \alpha$, so this equation is consistent with our model.  

For $s_i$ when $i > 1$, Equation~(\ref{s1d}) generalizes to
\begin{eqnarray}
\label{s2d}
\frac{ds_i}{dt} = p_i(1-s_1) - q(s_i/s_1)
\end{eqnarray}
That is, a cell containing a key with age at least $i$ is created whenever
the age of the key being placed is at least $i$ and an empty cell is found by the probe sequence.
A cell containing a key with age at least $i$ is removed whenever a deletion
occurs, and that deletion is for a cell holding a key of age at least $i$,
which occurs with probability $s_i/s_1$.  Assuming we start with alternating deletions and insertions
when $s_1 = \alpha$, 
we can write for $i \geq 1$:
\begin{eqnarray}
\label{s3d}
\frac{ds_i}{dt} = p_i(1-\alpha) - q(s_i/\alpha)
\end{eqnarray}
Note that, under this model, we again have that the $s_i$ term depends only on 
values of $p_j$ and $s_j$ with $j \leq i$;  hence we can apply Theorem~\ref{ldthm} 
over finite time intervals to the truncated family of equations up to $i \leq L$
for any constant $L$ to obtain accurate values for the limiting system.  

For this model, we find that there is a unique equilibrium distribution for the underlying
Equations~(\ref{s3d}) and (\ref{pix});  this gives us an idea as to the long term performance
of this approach.  In equilibrium, using $s_1 = \alpha$, we find that $ds_i/dt = 0$ gives the 
following reasonable equation:
\begin{eqnarray}
\label{recur}
s_i & = & \frac{p_i}{p_i + \frac{1-\alpha}{\alpha(2-\alpha)}}.  
\end{eqnarray}
Equation~(\ref{recur}) along with Equation~(\ref{pix}) can be used to show that
the $s_i$ again decrease double exponentially at some point at the equilibrium
given by the family of Equations~(\ref{s3d}).  

\begin{theorem} 
\label{thm:dex2}
In the asymptotic model for an infinite Robin Hood hash table
with load factor $\alpha < 1$ and alternating deletions, for sufficiently
large constants $i$, the value of $s_i$ at the equilibrium point where 
$ds_i/dt = 0$ everywhere satisfies
\begin{eqnarray}
s_{i} \leq c_1 c_2^{2^{i-c_3}}
\end{eqnarray}
for some constants $c_1,c_3 > 0$ and $c_2 < 1$ that may depend on $\alpha$.
\end{theorem}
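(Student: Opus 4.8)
The plan is to exploit the fact that, unlike the earlier double-exponential bound for the insertions-only case (which I argued through the time dynamics), here the claim concerns only the algebraic equilibrium fixed point, so I can work directly with Equations~(\ref{pix}) and~(\ref{recur}) without any reference to the differential equations in time. Write $C = \frac{1-\alpha}{\alpha(2-\alpha)}$, which is a strictly positive constant since $\alpha < 1$. Equation~(\ref{recur}) reads $s_i = p_i/(p_i+C)$, which I invert to $p_i = C s_i/(1-s_i)$, while Equation~(\ref{pix}) gives $p_{i+1} = p_i s_i$. These two relations are all I need.

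First I would establish geometric decay of the tails, which guarantees that the double-exponential iteration below eventually has something small to bootstrap from. Every $p_i$ is positive (by induction from $p_1 = 1/(2-\alpha)$ and $p_i = p_{i-1}s_{i-1}$), and $0 < s_i < 1$, so the sequence $p_i$ is strictly decreasing and $p_i \leq p_1$ for all $i$. Since $s_i = p_i/(p_i+C)$ is increasing in $p_i$, this yields $s_i \leq s_1 = \alpha < 1$. Feeding this back in gives $p_{i+1} = p_i s_i \leq \alpha p_i$, hence $p_i \leq \alpha^{i-1}p_1$ and $s_i \leq p_i/C \leq \alpha^{i-1}p_1/C \to 0$. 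In particular there is a constant $j$, depending on $\alpha$ (permitted, since $c_3$ may depend on $\alpha$), with $s_j \leq 1/4$.

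The heart of the argument is a one-step self-improving recurrence for the $s_i$. Combining the two relations,
\[
s_{i+1} = \frac{p_{i+1}}{p_{i+1}+C} \leq \frac{p_{i+1}}{C} = \frac{p_i s_i}{C} = \frac{s_i^2}{1-s_i}.
\]
For every $i \geq j$, where $s_i \leq 1/4$ (a bound preserved because $s_{i+1} \leq \frac{4}{3}s_i^2 \leq s_i$ once $s_i \leq 3/4$), this gives $s_{i+1} \leq 2 s_i^2$. Setting $a_k = 2 s_{j+k}$ turns this into $a_{k+1} \leq a_k^2$ with $a_0 = 2 s_j \leq 1/2$, so $a_k \leq (1/2)^{2^k}$ and therefore $s_{j+k} \leq \frac{1}{2}(1/2)^{2^k}$. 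Reindexing with $i = j+k$ produces $s_i \leq \frac{1}{2}(1/2)^{2^{i-j}}$, exactly the claimed form with $c_1 = c_2 = 1/2$ and $c_3 = j$.

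I do not expect a serious obstacle, since the whole argument is algebraic. The only point deserving care is the bootstrapping step, namely confirming that the tails genuinely become small enough to enter the double-exponential regime; but this follows immediately from the geometric decay established via $s_i \leq \alpha < 1$. The overall structure deliberately mirrors the earlier double-exponential theorem, with the role played there by the time-based estimate now taken over by the purely algebraic inequality $s_{i+1} \leq s_i^2/(1-s_i)$.
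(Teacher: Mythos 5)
Your proof is correct, and it rests on the same two ingredients as the paper's own proof: the inversion of the equilibrium relation Equation~(\ref{recur}) (the paper uses it in the weaker form $s_i < p_i/z$ with $z = \frac{1-\alpha}{\alpha(2-\alpha)}$, your $C$) combined with the level recurrence $p_{i+1} = p_i s_i$ from Equation~(\ref{pix}), bootstrapped from geometric tail decay starting at some constant index $j$ depending on $\alpha$. The differences are in execution, and yours is tidier. The paper splits into the cases $z \geq 1$ and $z < 1$, handling the first by a direct induction and the second by locating the smallest $j$ with $p_j \leq z^2$; moreover, for the geometric decay of the $p_i$ it appeals to a probabilistic property of the placement process (a key is placed with probability at least $1-\alpha$ per step). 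You avoid the case split entirely by deriving the single self-improving inequality $s_{i+1} \leq s_i^2/(1-s_i)$, and you obtain the geometric decay purely algebraically from $s_i \leq s_1 = \alpha$ (which follows from the monotonicity of $p \mapsto p/(p+C)$ together with the strict decrease of the $p_i$), so the entire argument stays inside the fixed-point equations with no reference back to the stochastic process. What the paper's version buys is brevity; what yours buys is a self-contained, case-free argument with explicit constants ($c_1 = c_2 = 1/2$, $c_3 = j$). The only blemish is the parenthetical justifying that $s_i \leq 1/4$ is preserved, which momentarily conflates the thresholds $1/4$ and $3/4$; the intended chain $s_{i+1} \leq \frac{4}{3}s_i^2 \leq \frac{1}{3}s_i \leq 1/4$ (valid whenever $s_i \leq 1/4$) is what the induction actually needs, and it holds.
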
 
\begin{proof}
In what follows let $s_i$ and $p_i$ refer to their values in equilibrium.
Let $z = \frac{1-\alpha}{\alpha(2-\alpha)}$.  From Equation~(\ref{recur}), $s_i < p_i/z$. 
If $z \geq 1$, using this and $p_i = s_{i-1} p_{i-1}$, we can induct to find
$$s_i \leq  \left(\frac{\alpha}{z(2-\alpha)} \right)^{2^{i-2}}.$$

For case $z < 1$, we note the tails of the $p_i$ must decrease
geometrically, since a key is placed in an empty cell 
or a cell with a deleted item with probability at least $1-\alpha$ at each step.
Let $j$ be the smallest value such that $p_j \leq z^2$.  Then inductively we find 
$$s_{j+k} \leq z^{2^{k}}.$$
\end{proof}

\section{Conclusion}
 
We have shown how to use the framework of Markov chains, often also
called the fluid limit analysis or mean-field approach, to analyze
Robin Hood hashing.  In particular, we have shown that for Robin Hood
hashing the analysis naturally requires the use of an additional level
process.  Besides providing a new way of gaining insight
into previous results, we have shown that our methods lead to a simple
recurrence describing the equilibrium behavior of Robin Hood hashing
under a natural deletion model when not using tombstones.  
 
Robin Hood hashing appears to perform essentially the same whether
using probe sequences based on double hashing and random hashing.
Proving this seems a worthwhile open question.  Relatedly, the recent
work of \cite{mitzenmacherdouble} applies fluid limit analysis to show that double
hashing yields the same behavior as fully random hashing under the
``balanced allocation'' paradigm.  Alternatively, one could try to
extend the approach of \cite{lueker1993more} used for standard open addressing hashing.

\bibliographystyle{plain}

\begin{thebibliography}{10}

\bibitem{ABKU}
Y. Azar, A. Broder, A. Karlin, E. Upfal.
Balanced allocations.
{\em SIAM Journal on Computing}, 29(1):180-200, 1999.

\bibitem{celisphd}
P. Celis.  Robin Hood Hashing.  {Ph.D.} thesis, University of
Waterloo, 1986.  Available as Technical Report CS-86-14, Computer
Science Department, University of Waterloo.

\bibitem{celis1985robin}
P. Celis, P. Larson, and I. Munro.
Robin hood hashing.
In {\em Proceedings of the 26th Annual Symposium on Foundations of Computer Science},
pp. 281-288, 1985.

\bibitem{devroye2004worst}
L. Devroye, P. Morin, and A. Viola.
On worst-case robin hood hashing.
{\em SIAM Journal on Computing}, 33(4):923-936, 2004.

\bibitem{janson2005individual}
S. Janson.  
Individual displacements for linear probing hashing with different insertion policies.
{\em ACM Transactions on Algorithms}, 1(2):177-213, 2005.

\bibitem{lueker1993more}
G. Lueker and M. Molodowitch.
More analysis of double hashing.
{\em Combinatorica}, 13(1):83-96, 1993.

\bibitem{luczak2013averaging}
M. Luczak, and J. Norris.
Averaging over fast variables in the fluid limit for Markov chains: application to the supermarket model with memory.
{\em The Annals of Applied Probability}, 23(3):957-986, 2013.

\bibitem{MitzThesis} M. Mitzenmacher. The power of two choices
in randomized load balancing.  Ph.D. thesis, {U.C. Berkeley}, 1996.

\bibitem{mitzenmacherdouble} 
M. Mitzenmacher.
Balanced allocations and double hashing. 
Available as arXiv:1209.5360.

\bibitem{mitzenmacher2002load} 
M. Mitzenmacher, B. Prabhakar, and D. Shah.
Load balancing with memory.
In {Proceedings of the 43rd Annual Symposium on Foundations of Computer Science},
pp. 799-808, 2002.

\bibitem{TwoSurvey} M. Mitzenmacher, A. Richa, and R. Sitaraman.
     \emph{The Power of Two Choices: A Survey of Techniques and
     Results}, edited by P. Pardalos, S. Rajasekaran, J. Reif, and J.
     Rolim. Kluwer Academic Publishers, Norwell, MA, 2001, pp.
     255-312.

\bibitem{MU}
M. Mitzenmacher and E. Upfal.
{\bf Probability and computing: Randomized algorithms and probabilistic analysis},
Cambridge University Press, 2005.

\bibitem{ShwartzWeiss}
A.~~Shwartz, and A. ~Weiss.
\newblock Large deviations for performance analysis, Queues, communication and computing.
\newblock {\em Stochastic modeling series, Chapman \& Hall}, London, 1995.


\bibitem{viola1995analysis}
A. Viola.
Analysis of hashing algorithms and a new mathematical transform.
{Ph.D.} thesis, University of
Waterloo, 1995.  Available as Technical Report CS-95-50, Computer
Science Department, University of Waterloo.

\bibitem{viola2005exact}
A. Viola.
Exact distribution of individual displacements in linear probing hashing.
{\em ACM Transactions on Algorithms}, 1(2):214-242, 2005.

\end{thebibliography}

\newpage

\section*{Appendices}
\subsection*{Proof of Theorem~\ref{thm:inapp}}

\begin{proof}
The result follows from Theorem~\ref{ldthm}.  While Theorem~\ref{ldthm}
is stated in terms of time instead of load, this difference in not consequential,
as we explain subsequently.
Let $Y_{i,n}$ be the random variable denoting the fraction of cells with keys of
age at least $i$ using Robin Hood hashing after $-n \ln (1-\alpha)$ unscaled time steps
in a hash table with $n$ cells.
Then Theorem~\ref{ldthm} gives us that 
\begin{eqnarray}
\Pr\left( \left|Y_{i,n}-s_i(-\ln(1-\alpha))\right|>\epsilon \right)
            & \leq & C_3 \exp(-nC_4(\epsilon)),
\end{eqnarray}
where $C_3$ is a constant that depends on $K$ and $\alpha$ and $C_4(\epsilon)$ is a
constant that depends on $K$, $\epsilon$, and $\alpha$.
Note that this depends on our restriction of 
the system to be finite dimensional, and the fact that 
the evolution of
the $s_i$ for $i \leq K$ only depends on the values $s_1,s_2,\ldots,s_K$.
(Again, this is another reason to write equations in terms of
the tails of the loads.)  The conditions of Theorem~\ref{ldthm} are easily checked.
In particular, we have noted the Markov process over levels is ergodic for any
given $s_i$ values.  For the second condition, the transition rates
$\nu_i(\bar{x};m)$ are finite size polynomials of load vector $\bar{x}$. 
This implies that $\log \nu_i(\bar{x};m)$ is Lipschitz continuous in coordinates of
$\bar{x}$.  In particular, the bound $x_i \leq 1$ gives an upper bound. 
If $\nu_i(\bar{x};m) = 0$, then it
means the transition is absent and we neglect it. Otherwise $x_i \geq
1/n_0$, giving a lower bound on the $\log \nu_i(\bar{x};m)$.
This completes the check of condition \ref{c2}.

Now note that $s'_i(\alpha) = s_i(-\ln (1-\alpha))$ in the limiting
system.  We find $X_{i,n}$ and $Y_{i,n}$ differ by $o(1)$ terms with high
probability, and in fact
$$\Pr\left( \left|Y_{i,n} - X_{i,n} \right| > \gamma \right ) \leq C_5 \exp(-nC_6(\gamma)),$$
where $C_5$ is a constant that depends on $K$ and $\alpha$ and $C_6(\gamma)$ is a
constant that depends on $K$, $\gamma$, and $\alpha$.
We sketch the reasoning: consider the coupling where we
perform the Robin Hood process for the maximum of $-n \ln(1 - \alpha)$
time steps and the number of time steps to reach load $\alpha$.  The
load after $-n \ln(1 - \alpha)$ time steps will be $\alpha \pm o(1)$
with high probability, by standard martingale arguments; alternatively,
the number of time steps to reach load $\alpha$ is $-n \ln(1 - \alpha)
+ o(n)$ with high probability by standard Chernoff-type bounds, since
the number of time steps to place each key is an independent geometric
random variable with bounded mean.  The theorem statement holds by
summing the probability that 
$\Pr\left( \left|Y_{i,n}-s_i(-\ln(1-\alpha))\right|>\epsilon/2 \right )$
and 
$\Pr\left( \left|Y_{i,n} - X_{i,n} \right| > \epsilon/2 \right )$.  
\end{proof}

\subsection*{Simulations for Insertions Only}

In this section we provide simulation results.  These results serve
the dual purpose of demonstrating the effectiveness of Robin Hood
hashing and verifying our analysis.  We note that simulation results
were also presented in \cite{celisphd}, and one might look there for
further discussion on the effectiveness of Robin Hood hashing.  Here,
the simulation results are presented for completeness, to provide a
high-level verification of the utility of theoretical framework.

Table~\ref{tab:table1} show results with a load $\alpha$ of $0.95$ on the
hash table.  The fraction of keys in the table with a given age (up to 7) are given.  The
results from the differential equations were calculated using the
standard Euler's method with discrete time steps of length $10^{-6}$;
that is, we calculate successive estimates of the variables from the
differential equations using the derivative at the current values and
advance time in steps of $10^{-6}$.  
We also calculate the results from the recurrence of  Theorem~\ref{celistheorem}.
For our simulation results, the
probe sequence for all of the elements were determined using the
pseudo-random generator drand48; all trials were performed in a single
seeded run.  For the simulations, the expression $x \pm y$ refers to
the average $x$ and standard deviation $y$ over 1000 trials.
Unsurprisingly, the differential equations and the results from Theorem~\ref{celistheorem}
agree quite closely, with the discrepancy explained by our calculation method for the
differential equations.  The theoretical results match the simulations very closely.

These results also show the potential effectiveness of Robin Hood
hashing.  Even at a load of $0.95$, the maximum age over these
simulations was 7.  The average number of probes for a successful
search (of a random key in the table) is approximately $3.15$; the
average number of probes for an unsuccessful search is approximately
$3.59$.  As pointed out by Celis \cite{celisphd}, there are further
possible ways to speed up searches by using procedures other than the
standard search procedure.  For example, since most keys have age 3 or
4 under this load, one can start the search with the third and fourth
entries of the probe sequence to reduce the expected number of cells
examined for a successful search.  Note this would be possible
given a hash function in the form $h : K \times \mathbb{N} \rightarrow
[0,n-1]$, so we can examine the $i$th entry in the probe sequence directly.  

\begin{table}
\center
{\footnotesize
\begin{tabular}{|c|c|c|c|c|c|}
\hline
Key & Differential &  Celis &   Sims  &  Sims &  Sims  \\
Age & Equations  &  Theorem  &  $n = 8192$ & $n = 65536$ & $n =524288$ \\ \hline
1 & 0.083458328 & 0.083458403 & 0.083621434 $\pm$ 0.004965813 & 0.083429753 $\pm$ 0.001831579 & 0.083428593 $\pm$ 0.000614445 \\ \hline
2 & 0.188976794 & 0.188976856 & 0.189157158 $\pm$ 0.008679919 & 0.189075507 $\pm$ 0.003140080 &
0.188945002 $\pm$ 0.001092080 \\ \hline
3 & 0.323793458 & 0.323793385 & 0.323707145 $\pm$ 0.008193784 & 0.323798214 $\pm$ 0.003014369 &
0.323775391 $\pm$ 0.001046901 \\ \hline
4 & 0.303363752 & 0.303363594 & 0.302934079 $\pm$ 0.008740126 & 0.303259705 $\pm$ 0.003202237 &
0.303395377 $\pm$ 0.001123541 \\ \hline
5 & 0.095303269 & 0.095303242 & 0.095385891 $\pm$ 0.009795472 & 0.095321415 $\pm$ 0.003651414 &
0.095341351 $\pm$ 0.001226667 \\ \hline
6 & 0.005092100 & 0.005092104 & 0.005182087 $\pm$ 0.001441913 & 0.005103134 $\pm$ 0.000523627 &
0.005101511 $\pm$ 0.000174886 \\ \hline
7 & 0.000012417 & 0.000012417 & 0.000012208 $\pm$ 0.000039393 & 0.000012271 $\pm$ 0.000014530 &
0.000012775 $\pm$ 0.000005037 \\ \hline
\end{tabular}
}
\caption{Results for Robin Hood hashing, both theoretical and from simulations.  The simulations
are for $\alpha = 0.95$ and 1000 trials.}
\label{tab:table1}
\end{table}

\begin{table}
\center
{\footnotesize
\begin{tabular}{|c|c|c|c|c|c|}
\hline
Key & Differential &  Celis &   Sims  &  Sims &  Sims  \\
Age & Equations  &  Theorem  &  $n = 8192$ & $n = 65536$ & $n =524288$ \\ \hline
1 & 0.083458328 & 0.083458403 & 0.083847726 $\pm$ 0.004951031 & 0.083431054 $\pm$ 0.001784843 & 
0.083421016 $\pm$ 0.000612585 \\ \hline
2 & 0.188976794 & 0.188976856 & 0.189757389 $\pm$ 0.008688261 & 0.188963523 $\pm$ 0.003113784 &
0.188964048 $\pm$ 0.001082245 \\ \hline
3 & 0.323793458 & 0.323793385 & 0.324344385 $\pm$ 0.008153945 & 0.323768419 $\pm$ 0.002991134 &
0.323744102 $\pm$ 0.001028609 \\ \hline
4 & 0.303363752 & 0.303363594 & 0.302252763 $\pm$ 0.008952148 & 0.303332755 $\pm$ 0.003156650 &
0.303420142 $\pm$ 0.001083855 \\ \hline
5 & 0.095303269 & 0.095303242 & 0.094714212 $\pm$ 0.009595910 & 0.095374789 $\pm$ 0.003633155 &
0.095344666 $\pm$ 0.001237186 \\ \hline
6 & 0.005092100 & 0.005092104 & 0.005069262 $\pm$ 0.001446734 & 0.005116963 $\pm$ 0.000521156 &
0.005093440 $\pm$ 0.000182278 \\ \hline
7 & 0.000012417 & 0.000012417 & 0.000014264 $\pm$ 0.000041576 & 0.000012496 $\pm$ 0.000014498 &
0.000012587 $\pm$ 0.000005136 \\ \hline
\end{tabular}
}
\caption{Results for Robin Hood hashing, both theoretical and from simulations, but here
the simulations use double hashing.}
\label{tab:table2}
\end{table}

It is worth noting (as was noted in \cite{celisphd} as well) that the results
appear essentially unchanged even if double hashing is used instead of
(our proxy for) fully random hashing.  (In double hashing, we choose a
starting point $a$ and an offset $b$ that is relatively prime to the
table size and our probe sequence is given by $a, a+b, a+2b,\ldots$,
where the values in the probe sequence are taken modulo the size of
the hash table.)  Table~\ref{tab:table2} shows a representative example.  We
are not sure yet how to prove this, although we suspect that theoretical techniques that
have shown double hashing has the same performance as fully random hashing
in other settings may apply (e.g., \cite{lueker1993more,mitzenmacherdouble}). 
The challenge lies in accounting for the ages of placed keys in such an analysis.

\subsection*{Simulations with Deletions}

We first consider the setting with tombstones.  Here our goal in the
simulation is simply to show that the proposed differential equations
accurately model the actual system for finite periods of time
correctly.  In order to keep tables reasonably sized, the result of
Table~\ref{tab:tablewithtomb} shows results with a load $\alpha$ of
$0.9$; here we load the table with $0.9n$ items, and then alternately
delete and insert items until $2n$ items have been placed.  For the
differential equations we again use Euler's method with discrete time
steps of length $10^{-6}$.  Here we show the fraction of keys in the
system (not including tombstones) by age.  The results are very
accurate, and begins to show the effect of using tombstones; the keys
of ages 1 and 2 are vanishing, and the ages of the keys in the cells
are increasing over time.  If we continued the simulation further, we
would see ages continue to grow well beyond 18.

\begin{table}
\center
{\footnotesize
\begin{tabular}{|c|c|c|c|c|}
\hline
Key & Differential &   Sims  &  Sims &  Sims  \\
Age & Equations  &   $n = 8192$ & $n = 65536$ & $n =524288$ \\ \hline
1 & 0.0000000012 & 0 $\pm$ 0 & 0 $\pm$ 0 & 0 $\pm$ 0 \\ \hline
2 & 0.0000000088 & 0 $\pm$ 0 & 0 $\pm$ 0 & 0.0000000042 $\pm$ 0.0000000947 \\ \hline
3 & 0.0000000621 & 0.0000004069 $\pm$ 0.0000074176 & 0.0000000509 $\pm$ 0.0000009272 & 0.0000000721 $\pm$ 0.0000003956 \\ \hline
4 & 0.0000004128 & 0.0000001356 $\pm$ 0.0000042869 & 0.0000003730 $\pm$ 0.0000025999 & 0.0000004472 $\pm$ 0.0000009996 \\ \hline
5 & 0.0000025165 & 0.0000016276 $\pm$ 0.0000147681 & 0.0000024584 $\pm$ 0.0000062977 & 0.0000026046 $\pm$ 0.0000023888 \\ \hline
6 & 0.0000140033 & 0.0000168181 $\pm$ 0.0000478803 & 0.0000138008 $\pm$ 0.0000151588 & 0.0000140084 $\pm$ 0.0000054692 \\ \hline
7 & 0.0000711550 & 0.0000779872 $\pm$ 0.0001004372 & 0.0000708691 $\pm$ 0.0000354163 & 0.0000718202 $\pm$ 0.0000124441 \\ \hline
8 & 0.0003302926 & 0.0003438220 $\pm$ 0.0002233071 & 0.0003195890 $\pm$ 0.0000746674 & 0.0003317559 $\pm$ 0.0000275909 \\ \hline
9 & 0.0014006589 & 0.0014721280 $\pm$ 0.0005133678 & 0.0013790479 $\pm$ 0.0001734967 & 0.0014051570 $\pm$ 0.0000624493 \\ \hline
10 & 0.0054203409 & 0.0056228130 $\pm$ 0.0012861569 & 0.0053175036 $\pm$ 0.0004155336 & 0.0054218866 $\pm$ 0.0001541873 \\ \hline
11 & 0.0190426783 & 0.0195947376 $\pm$ 0.0033890440 & 0.0187390051 $\pm$ 0.0011403881 & 0.0190677512 $\pm$ 0.0004144598 \\ \hline
12 & 0.0596408085 & 0.0610249559 $\pm$ 0.0089433574 & 0.0587034689 $\pm$ 0.0029890910 & 0.0597147622 $\pm$ 0.0010595725 \\ \hline
13 & 0.1576758321 & 0.1599762648 $\pm$ 0.0177522046 & 0.1557674884 $\pm$ 0.0061968220 & 0.1578802884 $\pm$ 0.0021976097 \\ \hline
14 & 0.3056376472 & 0.3063086939 $\pm$ 0.0174270189 & 0.3036759011 $\pm$ 0.0063584378 & 0.3057603924 $\pm$ 0.0022285217 \\ \hline
15 & 0.3239990269 & 0.3198004883 $\pm$ 0.0176422773 & 0.3255669696 $\pm$ 0.0058555114 & 0.3238154957 $\pm$ 0.0021086756 \\ \hline
16 & 0.1187678782 & 0.1173343280 $\pm$ 0.0260781256 & 0.1218899664 $\pm$ 0.0094442642 & 0.1185390890 $\pm$ 0.0033043206 \\ \hline
17 & 0.0079676382 & 0.0083817985 $\pm$ 0.0044648123 & 0.0085186328 $\pm$ 0.0015278714 & 0.0079452633 $\pm$ 0.0005104320 \\ \hline
18 & 0.0000292668 & 0.0000429947 $\pm$ 0.0000969077 & 0.0000348751 $\pm$ 0.0000273052 & 0.0000292015 $\pm$ 0.0000087025 \\ \hline
\end{tabular}
}
\caption{Results for Robin Hood hashing with deletions and tombstones, both theoretical and from simulations.  
Here $2n$ total items are inserted.  The simulations are for $\alpha = 0.90$ and 1000 trials.}
\label{tab:tablewithtomb}
\end{table}

Table~\ref{tab:tablenotomb} shows results with the same setup for the
load and deletion pattern.  Here the maximum age does not increase the
same way, as noted in our analysis, and the maximum age remains
smaller at 12.  Again, the differential equations prove highly
accurate.  In this case, however, we are further interested in the
equilibrium distribution of the age as given by
Equation~(\ref{recur}).  As a proxy we run the simulations again, but
after loading the table with $0.9n$ items, we alternately delete and
insert items until $10n$ items have been placed.  
Table~\ref{tab:tablenotombeq} shows these results, compared to the 
calculated equilibrium distribution from Equation~(\ref{recur}) (which
was derived from the corresponding differential equations).  Again,
we see that the results match well, showing the utility of the differential
equation.  Also, as shown Theorem~\ref{thm:dex2}, we see the probability a
cell has a certain age falls very quickly (doubly exponentially) at the tail
of the distribution.  The average time for a successful search naturally 
converges to 10;  the maximum age, and hence the time for an unsuccessful
search, is only around 16.  

\begin{table}
\center
{\footnotesize
\begin{tabular}{|c|c|c|c|c|}
\hline
Key & Differential &   Sims  &  Sims &  Sims  \\
Age & Equations  &   $n = 8192$ & $n = 65536$ & $n =524288$ \\ \hline
1 & 0.0109912456 & 0.0110180068 $\pm$ 0.0013529380 & 0.0110012567 $\pm$ 0.0005847770 & 0.0109806228 $\pm$ 0.0003822734 \\ \hline
2 & 0.0132627846 & 0.0132962114 $\pm$ 0.0014180089 & 0.0132489158 $\pm$ 0.0006426987 & 0.0132499410 $\pm$ 0.0004533081 \\ \hline
3 & 0.0164099523 & 0.0164749126 $\pm$ 0.0017334129 & 0.0164142764 $\pm$ 0.0007593122 & 0.0163919906 $\pm$ 0.0005562663 \\ \hline
4 & 0.0215495612 & 0.0216120513 $\pm$ 0.0020547157 & 0.0215255058 $\pm$ 0.0009606507 & 0.0215314282 $\pm$ 0.0007204324 \\ \hline
5 & 0.0330311741 & 0.0333027883 $\pm$ 0.0029126093 & 0.0329123421 $\pm$ 0.0014204389 & 0.0330133628 $\pm$ 0.0010944805 \\ \hline
6 & 0.0655968369 & 0.0660263377 $\pm$ 0.0054606729 & 0.0653358713 $\pm$ 0.0026905826 & 0.0655498797 $\pm$ 0.0021627231 \\ \hline
7 & 0.1508513719 & 0.1515972160 $\pm$ 0.0111574845 & 0.1502017361 $\pm$ 0.0059413695 & 0.1507750735 $\pm$ 0.0049173364 \\ \hline
8 & 0.2865694087 & 0.2866796841 $\pm$ 0.0139163312 & 0.2856982460 $\pm$ 0.0097570156 & 0.2864050409 $\pm$ 0.0091539804 \\ \hline
9 & 0.2955178737 & 0.2936264874 $\pm$ 0.0145071670 & 0.2955954899 $\pm$ 0.0101189923 & 0.2951344134 $\pm$ 0.0094283266 \\ \hline
10 & 0.1003906169 & 0.0994804734 $\pm$ 0.0148415190 & 0.1011552450 $\pm$ 0.0060279724 & 0.1001607244 $\pm$ 0.0036143267 \\ \hline
11 & 0.0058132004 & 0.0058682671 $\pm$ 0.0020576713 & 0.0058955320 $\pm$ 0.0007574272 & 0.0057927890 $\pm$ 0.0003043712 \\ \hline
12 & 0.0000159736 & 0.0000185627 $\pm$ 0.0000504047 & 0.0000165817 $\pm$ 0.0000181370 & 0.0000157326 $\pm$ 0.0000057864 \\ \hline
\end{tabular}
}
\caption{Results for Robin Hood hashing with deletions and no tombstones, both theoretical and from simulations.  
Here $2n$ total items are inserted.  The simulations are for $\alpha = 0.90$ and 1000 trials.}
\label{tab:tablenotomb}
\end{table}

\begin{table}
\center
{\footnotesize
\begin{tabular}{|c|c|c|c|c|}
\hline
Key & Calculated &   Sims  &  Sims &  Sims  \\
Age & Equilibrium  &   $n = 8192$ & $n = 65536$ & $n =524288$ \\ \hline
1 & 0.0109890110 & 0.0110295238 $\pm$ 0.0013242152 & 0.0109825070 $\pm$ 0.0005826258 & 0.0109803031 $\pm$ 0.0003835962 \\ \hline
2 & 0.0132380001 & 0.0132215540 $\pm$ 0.0014401913 & 0.0132458671 $\pm$ 0.0006504940 & 0.0132260213 $\pm$ 0.0004546164 \\ \hline
3 & 0.0162108987 & 0.0162228928 $\pm$ 0.0016968246 & 0.0161892632 $\pm$ 0.0007604040 & 0.0161916268 $\pm$ 0.0005516735 \\ \hline
4 & 0.0202345136 & 0.0202062958 $\pm$ 0.0019607504 & 0.0202353850 $\pm$ 0.0009027583 & 0.0202174129 $\pm$ 0.0006767551 \\ \hline
5 & 0.0258283516 & 0.0259177958 $\pm$ 0.0022695870 & 0.0258402714 $\pm$ 0.0010973218 & 0.0258078395 $\pm$ 0.0008603925 \\ \hline
6 & 0.0338433436 & 0.0339240307 $\pm$ 0.0027397431 & 0.0338269272 $\pm$ 0.0013708900 & 0.0338119259 $\pm$ 0.0011128289 \\ \hline
7 & 0.0457090363 & 0.0458778980 $\pm$ 0.0034805126 & 0.0457426674 $\pm$ 0.0017698065 & 0.0456839344 $\pm$ 0.0014981448 \\ \hline
8 & 0.0638449846 & 0.0640809076 $\pm$ 0.0043114573 & 0.0638339146 $\pm$ 0.0024221701 & 0.0637959164 $\pm$ 0.0020769152 \\ \hline
9 & 0.0921369579 & 0.0923216211 $\pm$ 0.0056647485 & 0.0921366789 $\pm$ 0.0034257988 & 0.0920613574 $\pm$ 0.0029824930 \\ \hline
10 & 0.1351848968 & 0.1354930164 $\pm$ 0.0077845001 & 0.1351204660 $\pm$ 0.0048713224 & 0.1350763809 $\pm$ 0.0043471712 \\ \hline
11 & 0.1893101510 & 0.1891220132 $\pm$ 0.0091155314 & 0.1891978863 $\pm$ 0.0064652559 & 0.1891614878 $\pm$ 0.0060426133 \\ \hline
12 & 0.2098741222 & 0.2093249217 $\pm$ 0.0100319761 & 0.2096149248 $\pm$ 0.0071636039 & 0.2097165961 $\pm$ 0.0067030989 \\ \hline
13 & 0.1226847741 & 0.1217393755 $\pm$ 0.0111532706 & 0.1223768263 $\pm$ 0.0054076900 & 0.1225484444 $\pm$ 0.0040878863 \\ \hline
14 & 0.0205100133 & 0.0201279800 $\pm$ 0.0038388175 & 0.0202694291 $\pm$ 0.0015550304 & 0.0203358892 $\pm$ 0.0007973781 \\ \hline
15 & 0.0004008004 & 0.0003907662 $\pm$ 0.0002525555 & 0.0003878323 $\pm$ 0.0000920074 & 0.0003857338 $\pm$ 0.0000357398 \\ \hline
16 & 0.0000001447 & 0.0000004065 $\pm$ 0.0000074139 & 0.0000001524 $\pm$ 0.0000016004 & 0.0000001291 $\pm$ 0.0000005412 \\ \hline
\end{tabular}
}
\caption{Results for Robin Hood hashing with deletions and no tombstones, both theoretical and from simulations.  
Here $10n$ total items are inserted, and compared against the calculated equilibrium distribution.  The simulations are for $\alpha = 0.90$ and 1000 trials.}
\label{tab:tablenotombeq}
\end{table}

\end{document}

alternatively

$$p_1(t) = 1$$
$$p_i(t) = x_{i-1}(t) + (1-x_{i-1}(t))p_{i-1}(t)$$

$$p_1(t) = 1$$
$$p_2(t) = (1-x_i)e^{x_1}$$
$$p_3(t) = (1-\alpha)^2 e^{x_1+x_2}...$$

$$p_1(t) = 1$$
$$p_i(t) = p_{i-1}(t)x_{i-1}(t)$$
Final equations are:
$$p_1(t) = 1$$
$$p_i(t) = p_{i-1}(t)x_{i-1}(t)$$
$$\frac{dx_1}{dt} = e^{-t},$$
$$\frac{dx_i}{dt} = p_i e^{-t}.$$ 

In the dynamic deletion case, we would have 
$$q(t) = (1-\alpha)/(2-\alpha)$$
$$p_1(t) = 1/(2-\alpha)$
$$p_i(t) = p_{i-1}x_1 + p_1 x_{i-1} - p_{i-1}x_{i-1}$$
Dynamic deletions:  
$$\frac{dx_1}{dt} = p_1(1-x_1) - q$$
$$\frac{dx_i}{dt} = p_i(1-x_1) - q\frac{x_i}{x_1}$$

\begin{table}
{\footnotesize
\begin{tabular}{|c|c|c|c|c|c|}
\hline
Bin & Differential &  Celis &   Sims  &  Sims &  Sims  \\
Load & Equations  &  Theorem  &  $n = 8192$ & $n = 65536$ & $n =524288$ \\ \hline
1 & 0.083458328 & 0.083458403 & 0.083690674 $\pm$ 0.004916462 & 0.083454697 $\pm$ 0.001787296 & 0.083455738 $\pm$ 0.000630094 \\ \hline
2 & 0.188976794 & 0.188976856 & 0.189505364 $\pm$ 0.008963810 & 0.188985968 $\pm$ 0.003040206 & 0.188939787 $\pm$ 0.001110823 \\ \hline
3 & 0.323793458 & 0.323793385 & 0.324094470 $\pm$ 0.008461055 & 0.323739482 $\pm$ 0.002931782 & 0.323790151 $\pm$ 0.000999256 \\ \hline
4 & 0.303363752 & 0.303363593 & 0.302730627 $\pm$ 0.008883236 & 0.303381672 $\pm$ 0.003113705 & 0.303396693 $\pm$ 0.001130344 \\ \hline
5 & 0.095303269 & 0.095303242 & 0.094856736 $\pm$ 0.010291792 & 0.095322135 $\pm$ 0.003582451 & 0.095311336 $\pm$ 0.001243452 \\ \hline 
6 & 0.005092100 & 0.005092104 & 0.005107464 $\pm$ 0.001442179 & 0.005104472 $\pm$ 0.000527622 & 0.005094017 $\pm$ 0.000174878 \\ \hline
7 & 0.000012417 & 0.000012417 & 0.000014959 $\pm$ 0.000044008 & 0.000011536 $\pm$ 0.000014357 & 0.000012240 $\pm$ 0.000004944 \\ \hline
\end{tabular}
}
\caption{Results for Robin Hood hashing, both theoretical and from simulations.  The simulations
are for $\alpha = 0.95$ and 1000 trials.}
\label{tab:table1}
\end{table}

\begin{table}
{\footnotesize
\begin{tabular}{|c|c|c|c|c|c|}
\hline
Bin & Differential &  Celis &   Sims  &  Sims &  Sims  \\
Load & Equations  &  Theorem  &  $n = 8192$ & $n = 65536$ & $n =524288$ \\ \hline
1 & 0.083458328 & 0.083458403 & 0.083621434 $\pm$ 0.004965813 & 0.083429753 $\pm$ 0.001831579 & 0.083428593 $\pm$ 0.000614445 \\ \hline
2 & 0.188976794 & 0.188976856 & 0.189157158 $\pm$ 0.008679919 & 0.189075507 $\pm$ 0.003140080 &
0.188945002 $\pm$ 0.001092080 \\ \hline
3 & 0.323793458 & 0.323793385 & 0.323707145 $\pm$ 0.008193784 & 0.323798214 $\pm$ 0.003014369 &
0.323775391 $\pm$ 0.001046901 & \\ \hline
4 & 0.303363752 & 0.303363594 & 0.302934079 $\pm$ 0.008740126 & 0.303259705 $\pm$ 0.003202237 &
0.303395377 $\pm$ 0.001123541 \\ \hline
5 & 0.095303269 & 0.095303242 & 0.095385891 $\pm$ 0.009795472 & 0.095321415 $\pm$ 0.003651414 &
0.095341351 $\pm$ 0.001226667 \\ \hline
6 & 0.005092100 & 0.005092104 & 0.005182087 $\pm$ 0.001441913 & 0.005103134 $\pm$ 0.000523627 &
0.005101511 $\pm$ 0.000174886 \\ \hline
7 & 0.000012417 & 0.000012417 & 0.000012208 $\pm$ 0.000039393 & 0.000012271 $\pm$ 0.000014530 &
0.000012775 $\pm$ 0.000005037 \\ \hline
\end{tabular}
}
\caption{Results for Robin Hood hashing, both theoretical and from simulations, but here
the simulations use double hashing.}
\label{tab:table2}
\end{table}

1 & 0.083458328 & 0.083458403 & 0.083847726 $\pm$ 0.004951031 & 0.083431054 $\pm$ 0.001784843 & 
0.083421016 $\pm$ 0.000612585 \\ \hline
2 & 0.188976794 & 0.188976856 & 0.189757389 $\pm$ 0.008688261 & 0.188963523 $\pm$ 0.003113784 &
0.188964048 $\pm$ 0.001082245 \\ \hline
3 & 0.323793458 & 0.323793385 & 0.324344385 $\pm$ 0.008153945 & 0.323768419 $\pm$ 0.002991134 &
0.323744102 $\pm$ 0.001028609 \\ \hline
4 & 0.303363752 & 0.303363594 & 0.302252763 $\pm$ 0.008952148 & 0.303332755 $\pm$ 0.003156650 &
0.303420142 $\pm$ 0.001083855 \\ \hline
5 & 0.095303269 & 0.095303242 & 0.094714212 $\pm$ 0.009595910 & 0.095374789 $\pm$ 0.003633155 &
0.095344666 $\pm$ 0.001237186 \\ \hline
6 & 0.005092100 & 0.005092104 & 0.005069262 $\pm$ 0.001446734 & 0.005116963 $\pm$ 0.000521156 &
0.005093440 $\pm$ 0.000182278 \\ \hline
7 & 0.000012417 & 0.000012417 & 0.000014264 $\pm$ 0.000041576 & 0.000012496 $\pm$ 0.000014498 &
0.000012587 $\pm$ 0.000005136 \\ \hline

0 $\pm$ 0 &
0 $\pm$ 0 &
0.0000004069 $\pm$ 0.0000074176 &
0.0000001356 $\pm$ 0.0000042869 &
0.0000016276 $\pm$ 0.0000147681 &
0.0000168181 $\pm$ 0.0000478803 &
0.0000779872 $\pm$ 0.0001004372 &
0.0003438220 $\pm$ 0.0002233071 &
0.0014721280 $\pm$ 0.0005133678 &
0.0056228130 $\pm$ 0.0012861569 &
0.0195947376 $\pm$ 0.0033890440 &
0.0610249559 $\pm$ 0.0089433574 &
0.1599762648 $\pm$ 0.0177522046 &
0.3063086939 $\pm$ 0.0174270189 &
0.3198004883 $\pm$ 0.0176422773 &
0.1173343280 $\pm$ 0.0260781256 &
0.0083817985 $\pm$ 0.0044648123 &
0.0000429947 $\pm$ 0.0000969077 &
0 $\pm$ 0 &
0 $\pm$ 0 &
0 $\pm$ 0 &
0.0000000509 $\pm$ 0.0000009272 &
0.0000003730 $\pm$ 0.0000025999 &
0.0000024584 $\pm$ 0.0000062977 &
0.0000138008 $\pm$ 0.0000151588 &
0.0000708691 $\pm$ 0.0000354163 &
0.0003195890 $\pm$ 0.0000746674 &
0.0013790479 $\pm$ 0.0001734967 &
0.0053175036 $\pm$ 0.0004155336 &
0.0187390051 $\pm$ 0.0011403881 &
0.0587034689 $\pm$ 0.0029890910 &
0.1557674884 $\pm$ 0.0061968220 &
0.3036759011 $\pm$ 0.0063584378 &
0.3255669696 $\pm$ 0.0058555114 &
0.1218899664 $\pm$ 0.0094442642 &
0.0085186328 $\pm$ 0.0015278714 &
0.0000348751 $\pm$ 0.0000273052 &
0 $\pm$ 0 &
0 $\pm$ 0 &
0.0000000042 $\pm$ 0.0000000947 &
0.0000000721 $\pm$ 0.0000003956 &
0.0000004472 $\pm$ 0.0000009996 &
0.0000026046 $\pm$ 0.0000023888 &
0.0000140084 $\pm$ 0.0000054692 &
0.0000718202 $\pm$ 0.0000124441 &
0.0003317559 $\pm$ 0.0000275909 &
0.0014051570 $\pm$ 0.0000624493 &
0.0054218866 $\pm$ 0.0001541873 &
0.0190677512 $\pm$ 0.0004144598 &
0.0597147622 $\pm$ 0.0010595725 &
0.1578802884 $\pm$ 0.0021976097 &
0.3057603924 $\pm$ 0.0022285217 &
0.3238154957 $\pm$ 0.0021086756 &
0.1185390890 $\pm$ 0.0033043206 &
0.0079452633 $\pm$ 0.0005104320 &
0.0000292015 $\pm$ 0.0000087025 &